\documentclass[11pt]{article}
\usepackage[margin=1in]{geometry}
\usepackage{amsmath,amssymb,amsthm,booktabs,array,microtype,mathrsfs,listings}
\usepackage[colorlinks=true,allcolors=blue]{hyperref}
\newtheorem{proposition}{Proposition}
\newtheorem{theorem}{Theorem}

\newcommand{\dd}{\mathrm d}
\newcommand{\HH}{\mathcal H}
\newcommand{\PP}{\mathcal P}
\newcommand{\al}{\alpha}
\title{Towards a Complete Kerr Soft-Hair Theory:\\Charge Integrability, Counterterm Equivalence,\\and Quantum State Counting}
\author{Sandi Setiawan\\\small Life Member, Clare Hall, University of Cambridge\\\small email: sandi.setiawan@cantab.net}
\date{October 7, 2026}
\begin{document}
\maketitle
\begin{abstract}
The horizon symmetry construction of Haco, Hawking, Perry and Strominger yields two Virasoro central charges $c_L=c_R=12J$ after a specified Wald--Zoupas correction. Turning that calculation into a black hole microstate theory requires separate control of nonlinear charge integrability, the allowed counterterm class, and quantum state multiplicities. We formulate exact local and global integrability tests, prove uniqueness of the counterterm coefficient within a one-parameter ansatz, and identify the quotient in which a physically meaningful uniqueness theorem must be stated. A coadjoint-orbit model supplies globally defined classical moment maps, while unitary highest-weight modules supply an explicit algebraic quantum realization for macroscopic positive angular momentum. Neither construction is proved equivalent to the gravitational phase space. A character calculation shows why a single Virasoro representation, despite having the desired central charge, does not provide the required large-central-charge entropy. Additional calculations exhibit a nonzero horizon-charge curl on the exact Kerr family, nonlinear integrable charges in a smooth isolated-horizon sector with zero central extension, and a regularity obstruction to identifying that sector with the HHPS modes. We derive a curvature improvement invisible to the constant mode, quantize a regulated area--boost corner polarization, and find a continuous kinematic spectrum with divergent Gibbs trace. Exact descendant counts and a Kerr spin threshold sharpen the state-counting limitations. Mixed temperature--mode derivatives, a moving-regulator cancellation, and a weighted-space trace test identify further requirements on a nonlinear phase-space construction. Explicit local modular CFT spectra with identical central charges yield different finite-temperature entropies, demonstrating that the available symmetry and quantum consistency data do not select the microscopic theory. We give a conditional sufficiency theorem and a concrete sequence of calculations needed to establish a gravitational completion. The paper supplies mathematical criteria and obstruction tests; it does not claim an unconditional first-principles derivation of Kerr microstates.
\end{abstract}
\noindent\textbf{Keywords:} Kerr black holes; soft hair; covariant phase space; Wald--Zoupas counterterms; Virasoro representations; black hole entropy.

\section{Introduction and scope}
For a generic rotating Kerr black hole, Haco, Hawking, Perry and Strominger (HHPS) construct left and right horizon vector fields and evaluate a central extension using covariant surface charges \cite{hhps}. Their result is remarkable: after adding a geometric boundary correction, the central charges lose their dependence on the left and right temperatures and become $12J$. Inserting these values in the canonical Cardy formula yields the Bekenstein--Hawking entropy.

The paper nevertheless explicitly distinguishes removal of this obstruction from proof of the existence of fully integrable quantum charges. Its Section 5 leaves a comprehensive phase-space analysis to future work and does not establish uniqueness of the correction. These qualifications matter because a central extension is only one datum of a microscopic theory. It does not specify the phase space, its global topology, the quantum representation content, or the spectral density.

Our purpose is to address these gaps with statements that can be checked independently. We separate four levels: the supplied horizon integral calculation; rigorous consequences of that calculation; explicitly constructed auxiliary models; and unproved conditions needed to identify those models with gravity. Standard symplectic and representation-theoretic constructions are used as tools, not presented as new discoveries. The significance of the analysis is the separation of the tests needed for completion and the explicit demonstration of what the symmetry data fail to determine.

A fully first-principles completion cannot be obtained by declaring the missing conditions true. In particular, choosing a Hilbert space whose entropy already equals the area would merely insert the answer. The present work avoids that circular step.

We set $G=\hbar=c_{\rm light}=k_B=1$ unless stated otherwise and choose the azimuthal orientation so that $J>0$. Central charge is dimensionless; in physical units $c_L=c_R=12J_{\rm physical}/\hbar$. The symbol $J$ is not an energy in joules. We restrict attention to nonextremal rotating Kerr backgrounds. The limits $a\to0$ and $T_R\to0$ require separate coordinate and boundary analyses.

\section{What the horizon calculation establishes}
Let $a=J/M$ and
\begin{equation}
 r_\pm=M\pm\sqrt{M^2-a^2},\qquad
 T_L=\frac{r_++r_-}{4\pi a},\qquad
 T_R=\frac{r_+-r_-}{4\pi a}.
\end{equation}
The temperatures are dimensionless quantities conjugate to the conformal zero modes, not the Hawking temperature itself. HHPS compute a linearized charge with an Iyer--Wald contribution and a Wald--Zoupas correction \cite{iw,wz}:
\begin{equation}
 \al_\zeta[g;\delta g]=\delta Q_\zeta^{\rm IW}+\delta Q_\zeta^X,
 \qquad \delta Q_\zeta^X=\frac1{16\pi}\int_S\iota_\zeta(*X).
\end{equation}
Here $S$ is approached as a regulated bifurcation surface, $X$ is a spacetime one-form linear in the perturbation, and $\al_\zeta$ is a one-form on the space of solutions. Their candidate is
\begin{equation}\label{candidate}
 X_a^{(0)}=2h_{ab}\Omega^b,\qquad
 \Omega_a=q_a{}^c n_b\nabla_c l^b,\qquad l\cdot n=-1.
\end{equation}
We retain the HHPS metric-variation convention for this formula; changing from inverse-metric to covariant-metric perturbations requires the corresponding sign conversion. $\Omega_a$ is the normal-bundle connection determined by the chosen null frame.

Using the surface integrals supplied in HHPS, the cubic central coefficients are
\begin{align}
 K_R^{\rm IW}&=\frac{2JT_R}{T_L+T_R}m^3\delta_{m+n,0},
 &K_R^X&=\frac{J(T_L-T_R)}{T_L+T_R}m^3\delta_{m+n,0},\\
 K_L^{\rm IW}&=\frac{2JT_L}{T_L+T_R}m^3\delta_{m+n,0},
 &K_L^X&=\frac{J(T_R-T_L)}{T_L+T_R}m^3\delta_{m+n,0}.
\end{align}
Consequently $K_A=Jm^3\delta_{m+n,0}$ for $A=L,R$, and comparison with $c_A m^3/12$ gives $c_A=12J$. We check the combination of the quoted terms; we do not claim a new independent tensor-level evaluation of their integrands.

\subsection{A restricted coefficient uniqueness result}
Consider only the ansatz $X^{(\lambda)}=\lambda X^{(0)}$, with constant $\lambda$, and retain the supplied integral formulas. Write $x=T_R/(T_L+T_R)$. Then
\begin{equation}\label{lambda}
 \frac{K_R^{(\lambda)}}{Jm^3\delta_{m+n,0}}
 =\lambda+2(1-\lambda)x,
 \qquad
 \frac{K_L^{(\lambda)}}{Jm^3\delta_{m+n,0}}
 =2-\lambda+2(\lambda-1)x.
\end{equation}
The ratios in this equation denote the coefficients of the indicated mode structure, including when the Kronecker delta vanishes.
\begin{proposition}
On a fixed-$J$ Kerr family over an open interval of nonextremal masses, constancy of either cubic coefficient in \eqref{lambda} forces $\lambda=1$. Both central charges then equal $12J$.
\end{proposition}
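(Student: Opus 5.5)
The plan is to view both normalized cubic coefficients in \eqref{lambda} as affine functions of $x=T_R/(T_L+T_R)$ along the fixed-$J$ family, and to show that $x$ is not constant there. Constancy of an affine function of a nonconstant variable then forces its slope to vanish.

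The first step computes $x$ explicitly. From the definitions, $T_L+T_R=2r_+/(4\pi a)$ and $T_R=(r_+-r_-)/(4\pi a)$, so the factors of $4\pi a$ cancel:
\begin{equation*}
 x=\frac{r_+-r_-}{2r_+}=\frac{\sqrt{M^2-a^2}}{M+\sqrt{M^2-a^2}}.
\end{equation*}
With $a=J/M$ we have $M^2-a^2=M^2-J^2/M^2$. On the nonextremal range $M>\sqrt J$ we can set $s=\sqrt{1-J^2/M^4}\in(0,1)$, which gives $x=s/(1+s)$. The map $s\mapsto s/(1+s)$ is strictly increasing, and $s$ is strictly increasing in $M$. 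Hence $x$ is a strictly monotone function of $M$, and its image over any open interval of nonextremal masses is a nondegenerate interval. The one thing to check carefully is that the chosen interval lies inside $M>\sqrt J$ with $J>0$ fixed, so that every quantity is smooth and real. This monotonicity is the only real content of the argument; everything after it is elementary.

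The second step uses \eqref{lambda} directly. The right-moving normalized coefficient is $\lambda+2(1-\lambda)x$, with slope $2(1-\lambda)$ in $x$. If it is constant on the family, then, because $x$ takes infinitely many values (equivalently, $\dd x/\dd M\neq0$), the slope must vanish, so $\lambda=1$. The left coefficient $2-\lambda+2(\lambda-1)x$ has slope $2(\lambda-1)$, so its constancy gives the same conclusion. At $\lambda=1$ both coefficients reduce to $1$. Thus $K_A=Jm^3\delta_{m+n,0}$, and comparing with $c_Am^3/12$ yields $c_L=c_R=12J$.

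A remark on interpretation belongs in the write-up. We use "coefficient of the mode structure $m^3\delta_{m+n,0}$" as the paper specifies, so the division by $Jm^3\delta_{m+n,0}$ is a formal extraction of a coefficient and not an actual division, and no issue arises when the delta vanishes. The argument uses only the ansatz $X^{(\lambda)}=\lambda X^{(0)}$ with constant $\lambda$, together with the HHPS integrals as supplied. Linearity of $\delta Q^X$ in $X$ is what produces \eqref{lambda} from the $K^X$ terms scaled by $\lambda$.
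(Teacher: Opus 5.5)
Your proposal is correct and follows the paper's own proof: both note that $x$ varies with $M$ at fixed $J$, so the slope in $x$ of either coefficient in \eqref{lambda} must vanish, which gives $\lambda=1$ and both coefficients equal to $1$. Your explicit reduction $x=s/(1+s)$ with $s=\sqrt{1-J^2/M^4}$ supplies, through strict monotonicity in $M$, the justification that the paper states without detail.
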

\begin{proof}
At fixed positive $J$, $a=J/M$ and
\begin{equation}
 x=\frac{\sqrt{M^2-a^2}}{M+\sqrt{M^2-a^2}}
\end{equation}
varies with $M$. The coefficients of $x$ must therefore vanish. Each condition gives $\lambda=1$.
\end{proof}
This proves uniqueness of a coefficient in a specified one-dimensional ansatz. It does not prove uniqueness among all local covariant counterterms, nor does a test at only one background establish this result. The sum $K_L^{(\lambda)}+K_R^{(\lambda)}=2Jm^3\delta_{m+n,0}$ is independent of $\lambda$, so testing only the sum cannot select the correction.

\section{Full integrability is a field-space problem}
\subsection{The exact criterion}
Let $\PP_J$ be a proposed reduced space of on-shell gravitational configurations at fixed angular momentum. The gauge quotient, boundary conditions, function spaces, and regulator must be specified before this notation denotes a defined mathematical object. A charge exists locally precisely when its variation one-form is locally exact:
\begin{equation}\label{int}
 \al_\zeta=\dd_{\PP}Q_\zeta.
\end{equation}
For tangent vector fields $\delta_1,\delta_2$ on $\PP_J$, define the obstruction
\begin{equation}\label{curl}
 \mathcal I_\zeta(\delta_1,\delta_2)
 =\delta_1[\al_\zeta(\delta_2)]-
 \delta_2[\al_\zeta(\delta_1)]-
 \al_\zeta([\delta_1,\delta_2]).
\end{equation}
This is exactly $(\dd_{\PP}\al_\zeta)(\delta_1,\delta_2)$.
\begin{proposition}\label{integrability}
On a contractible smooth chart where the Poincare lemma applies, $\mathcal I_\zeta=0$ for all tangent pairs is equivalent to local charge integrability. On a connected phase space, global integrability additionally requires all periods $\oint_\gamma\al_\zeta$ to vanish.
\end{proposition}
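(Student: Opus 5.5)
The plan is to reduce both claims to standard facts about one-forms on a smooth manifold, applied to $\al_\zeta$ regarded as a one-form on $\PP_J$. First I would confirm that \eqref{curl} is the coordinate-free formula for the exterior derivative, $(\dd_{\PP}\al_\zeta)(\delta_1,\delta_2)=\delta_1[\al_\zeta(\delta_2)]-\delta_2[\al_\zeta(\delta_1)]-\al_\zeta([\delta_1,\delta_2])$. This identity holds for arbitrary vector fields, so $\mathcal I_\zeta$ is $C^\infty$-bilinear and antisymmetric. In particular its vanishing on all tangent pairs is a pointwise condition equivalent to $\dd_{\PP}\al_\zeta=0$. This step presupposes that $\PP_J$ has been given the structure of a smooth (possibly Fréchet or Banach) manifold, as the text requires before the notation is meaningful. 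I would state that explicitly as a hypothesis.

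\emph{Local equivalence.} If $\al_\zeta=\dd_{\PP}Q_\zeta$ on a chart $U$, then $\dd_{\PP}^2=0$ gives $\mathcal I_\zeta=0$ there. Conversely, suppose $\dd_{\PP}\al_\zeta=0$ on a contractible chart $U$. I would construct $Q_\zeta$ explicitly by the homotopy formula. Pick a base point $g_0\in U$ and a smooth contraction $H:[0,1]\times U\to U$; on a star-shaped chart one can take the straight-line contraction. Then set $Q_\zeta(g)=\int_0^1\al_\zeta\big(\partial_t H(t,g)\big)\,\dd t$. Differentiating under the integral and using closedness yields $\dd_{\PP}Q_\zeta=\al_\zeta$. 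This is exactly the Poincaré lemma, which the statement allows us to invoke.

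\emph{Main obstacle.} I expect the main difficulty to be the infinite-dimensional setting. Differentiation under the integral and the validity of the Poincaré lemma need smoothness of $\al_\zeta$ in the chosen topology. The hypothesis ``where the Poincaré lemma applies'' is intended to absorb precisely this. I would note that it holds on open star-shaped subsets of locally convex spaces for smooth forms, and otherwise leave it as an assumption.

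\emph{Global part.} Assume $\PP_J$ is connected and $\dd_{\PP}\al_\zeta=0$. If $\al_\zeta=\dd_{\PP}Q_\zeta$ globally, Stokes' theorem on loops gives $\oint_\gamma\al_\zeta=Q_\zeta(\gamma(1))-Q_\zeta(\gamma(0))=0$ for every piecewise smooth closed curve $\gamma$. Conversely, if all periods vanish, fix $g_0$ and define $Q_\zeta(g)=\int_{\gamma_g}\al_\zeta$ along any path from $g_0$ to $g$; such paths exist because the space is connected. Vanishing periods make this path-independent. Comparing with the local primitives then shows $\dd_{\PP}Q_\zeta=\al_\zeta$: two local primitives on overlapping contractible charts differ by locally constant functions, and these constants are fixed consistently by the path integral. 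Equivalently, the de Rham class $[\al_\zeta]\in H^1$ is detected by its periods, and the obstruction is precisely the homomorphism $\pi_1(\PP_J)\to\mathbb R$, $\gamma\mapsto\oint_\gamma\al_\zeta$. I would finish by remarking that on a simply connected $\PP_J$ the global condition is automatic, whereas a non-contractible loop, for example one generated by a compact gauge or rotation orbit surviving the quotient, can carry a nonzero period.
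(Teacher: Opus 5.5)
Your proposal is correct and follows the same route as the paper: exactness gives closure via $\dd_{\PP}^2=0$ and gives vanishing periods, the Poincaré lemma on a contractible chart gives local primitives, and path integration of $\al_\zeta$ from a base point gives a global primitive, since vanishing periods make the integral path independent. Your additional details simply make explicit what the paper's three-line proof leaves implicit: the homotopy-operator formula for the local primitive, the check that the path-integral $Q_\zeta$ differs from each local primitive by a constant, and the smooth-manifold hypothesis on $\PP_J$.
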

\begin{proof}
Exactness implies closure and zero periods. Conversely, local closure gives local primitives. If all periods vanish, integrating $\al_\zeta$ from a base point is path independent and defines a global primitive.
\end{proof}
For infinite-dimensional phase spaces, the smooth structure and applicable local exactness theorem must be supplied; the finite-dimensional Poincare lemma cannot simply be invoked without functional-analytic assumptions.

Constancy of a central coefficient examines particular contractions involving symmetry variations. Equation \eqref{curl} tests arbitrary tangent pairs. For example, on an auxiliary phase space with coordinates $(x,y,u,v)$, adding $u\,\dd v$ to a charge one-form leaves every evaluation on symmetry vectors tangent only to $(x,y)$ unchanged, while adding $\dd u\wedge\dd v$ to its curl. This elementary example is not asserted to be a gravitational counterterm. It proves the logical point: data restricted to symmetry directions do not determine closure in transverse directions.

\subsection{Flux and boundary corrections}
For field-independent $\zeta$, a fixed integration surface, and the conventional form $k_\zeta^{\rm IW}=\delta\mathbf Q_\zeta-\iota_\zeta\theta$, the Iyer--Wald obstruction is
\begin{equation}\label{flux}
 (\dd_{\PP}\al_\zeta^{\rm IW})(\delta_1,\delta_2)
 =-\int_S\iota_\zeta\omega(g;\delta_1g,\delta_2g),
\end{equation}
where $\omega=\delta_1\theta[\delta_2g]-\delta_2\theta[\delta_1g]$ for commuting variations. Thus a correction must satisfy
\begin{equation}\label{cancel}
 \dd_{\PP}\al_\zeta^X
 =\int_S\iota_\zeta\omega
\end{equation}
as a two-form identity on the allowed phase space. This is stronger than canceling the temperature dependence of $K_{m,n}$. In radiative settings the Wald--Zoupas prescription may define a flux-balanced charge rather than an ordinary Hamiltonian on a closed subsystem \cite{wz}. A completion must say which interpretation is intended and account for any boundary flux.

If the generators or surfaces depend on $g$, extra terms enter. Equation \eqref{curl}, applied to the total charge one-form, remains the safe definition. One either freezes the mode frame as boundary data or uses the adjusted bracket
\begin{equation}
 [\zeta,\eta]_*=[\zeta,\eta]-\delta_\zeta\eta+\delta_\eta\zeta
\end{equation}
with a consistently defined action convention. One must include the variation of $\zeta$ in the total differentiation and, in the usual surface-charge prescription, subtract the Noether charge associated with $\delta\zeta$. The parameter dependence of the HHPS temperatures makes this issue material when mass is allowed to vary.

\subsection{A regulated gravitational calculation}
At a nonzero cutoff $\epsilon$, define $\al_{\zeta,\epsilon}$ on a stretched surface $S_\epsilon$ using a fixed identification with an abstract corner. A successful proof requires a limit
\begin{equation}
 \al_\zeta=\lim_{\epsilon\to0}\al_{\zeta,\epsilon},\qquad
 \dd_{\PP}\al_\zeta=\lim_{\epsilon\to0}\dd_{\PP}\al_{\zeta,\epsilon},
\end{equation}
with estimates that justify interchange of variation and limit. Finiteness of one central integral does not justify either interchange. The oscillatory modes and poles used by HHPS must remain admissible under the proposed nonlinear boundary conditions.

The perturbation pairs must include symmetry--symmetry, symmetry--regular, regular--regular, and stationary-parameter--mode variations whenever these are allowed. One must establish the linearized constraints for each pair, closure of the boundary conditions under finite transformations, and the role of horizon shear or news. If radiative modes are excluded, this exclusion is a physical restriction, not a consequence of the central calculation.

The stationary Kerr family alone cannot test full integrability. At fixed $J$ it is locally one-dimensional, so every smooth two-form vanishes on it. Closure there is automatic and says nothing about a phase space containing independent horizon excitations. Conversely, checking only pure diffeomorphisms may describe a symmetry orbit while leaving transverse gravitational perturbations untested.

\section{What counterterm uniqueness can mean}
\subsection{A charge map and its kernel}
Fix an admissible class $\mathscr X$ of local covariant, perturbation-linear one-forms with a stated derivative bound, null-frame transformation law, and boundary behavior. Define
\begin{equation}
 \mathscr C(X)_\zeta(\delta g)
 =\frac1{16\pi}\lim_{\epsilon\to0}\int_{S_\epsilon}
 \iota_\zeta(*X[g;\delta g]).
\end{equation}
A counterterm lies in $\ker\mathscr C$ if it contributes zero for every allowed perturbation and generator. The kernel depends on the boundary conditions and regulator. It cannot be inferred from evaluations on $\delta g=\mathcal L_{\zeta_n}g$ alone.

Suppose $X$ and $X'$ both solve the same integrability equation with fixed Iyer--Wald data. Their difference obeys
\begin{equation}\label{diff}
 \dd_{\PP}\mathscr C(X'-X)_\zeta=0.
\end{equation}
Locally, $\mathscr C(X'-X)_\zeta=\dd_{\PP}F_\zeta$. This identity classifies the remaining charge improvement; it does not force $F_\zeta=0$. The linearized map should be analyzed before taking a quotient by invisible or conventional terms.

\begin{proposition}
If two sets of globally integrable charges generate the same vector fields with respect to the same nondegenerate reduced symplectic form, their differences are constant on each connected component. Fixing their values at a reference configuration then makes the charges identical.
\end{proposition}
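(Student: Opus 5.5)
The plan is to reduce the statement to the nondegeneracy of the reduced symplectic form together with the fact that a closed-and-exact zero one-form on a connected manifold has locally constant primitives. Let $\Omega$ denote the nondegenerate reduced symplectic form on $\PP_J$. Let $Q_\zeta$ and $Q'_\zeta$ be the two globally integrable charges, so that $\al_\zeta=\dd_{\PP}Q_\zeta$ and $\al'_\zeta=\dd_{\PP}Q'_\zeta$ hold globally, in the sense of Proposition~\ref{integrability}. Both are assumed to generate the same Hamiltonian vector field $V_\zeta$. With a fixed sign convention this means
\begin{equation}
 \iota_{V_\zeta}\Omega=-\dd_{\PP}Q_\zeta,\qquad
 \iota_{V_\zeta}\Omega=-\dd_{\PP}Q'_\zeta .
\end{equation}
The first step is to subtract these two equations. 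This gives $\dd_{\PP}(Q_\zeta-Q'_\zeta)=0$ as a one-form on the whole reduced space.

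Nondegeneracy enters in the logically prior step. The phrase ``generate the same vector fields'' must be turned into a statement about the charge differentials. Because $\Omega$ is nondegenerate, the map $V\mapsto\iota_V\Omega$ is injective. Equality of vector fields is therefore equivalent to equality of $\dd_{\PP}Q_\zeta$ and $\dd_{\PP}Q'_\zeta$. In infinite dimensions I would state this as weak nondegeneracy, meaning injectivity only. That is all the argument needs: the hypothesis gives the vector fields, so surjectivity of the flat map is never used.

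The second step shows that a function $F=Q_\zeta-Q'_\zeta$ with $\dd_{\PP}F=0$ is constant on each connected component. Take two points $p,q$ in one component and join them by a piecewise smooth path $\gamma$. Then $F(q)-F(p)=\int_\gamma \dd_{\PP}F=0$. Here I assume that connectedness implies smooth path-connectedness, which holds for manifolds modeled on locally convex spaces. Alternatively, the set $\{F=F(p)\}$ is open by the local constancy and closed by continuity, hence it is the whole component. The third step takes a reference configuration $g_0$ in each component. Requiring $Q_\zeta(g_0)=Q'_\zeta(g_0)$ makes the constant vanish, so $Q_\zeta=Q'_\zeta$. If the phase space has several components, one reference configuration per component is needed. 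I will note this explicitly, since the statement's phrase ``a reference configuration'' tacitly assumes connectedness.

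The main obstacle is the functional-analytic meaning of the hypotheses, not the algebra. One issue is whether $\Omega$ is only weakly nondegenerate after the gauge quotient. Another is whether $Q_\zeta$ is differentiable in the chosen topology, and whether ``locally constant'' follows from $\dd F=0$ on a Fréchet manifold. I would handle the last point by the path-integration argument, which needs only a fundamental theorem of calculus along smooth curves and no Poincaré lemma. This agrees with the caveat following Proposition~\ref{integrability}. I would also remark that the conclusion concerns only the pairing through the same $\Omega$. If the counterterm changes $\Omega$ by a boundary term, the proposition does not apply. The remaining improvement $F_\zeta$ of \eqref{diff} is then fixed only up to such constants and the choice of reduced form.
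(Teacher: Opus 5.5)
Your proof is correct and follows the paper's argument. The paper's proof is: both charges satisfy $\dd Q_\zeta=\iota_{V_\zeta}\Omega_{\rm red}$ (your sign convention is immaterial) with the same $V_\zeta$ and the same form, hence $\dd(Q'_\zeta-Q_\zeta)=0$, so the difference is locally constant and normalization removes it. Your point that one reference configuration is needed per component is a valid refinement of the statement.

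One small correction: passing from equal vector fields to equal differentials needs no nondegeneracy, only substitution of the common $V_\zeta$. Injectivity of $V\mapsto\iota_V\Omega$ matters only for the converse, i.e.\ for the vector field generated by a charge to be uniquely defined.
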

\begin{proof}
With the convention $\dd Q_\zeta=\iota_{V_\zeta}\Omega_{\rm red}$, both differentials are identical. Therefore $\dd(Q'_\zeta-Q_\zeta)=0$.
\end{proof}
This is uniqueness of normalized Hamiltonians for an already specified symplectic action. It is not uniqueness of a spacetime representative $X_a$, and does not apply if the correction also changes the symplectic structure or the boundary action.

\subsection{Frame dependence and a restricted classification}
The rotation one-form transforms under a null-frame boost as
\begin{equation}
 l\mapsto e^\sigma l,\qquad n\mapsto e^{-\sigma}n,
 \qquad \Omega_a\mapsto\Omega_a-D_a\sigma
\end{equation}
with the definition in \eqref{candidate}. Hence $X^{(0)}$ changes by $-2h_{ab}D^b\sigma$. Its integrated contribution must be checked; it is not automatically an exact charge variation for arbitrary perturbations. A fixed boost gauge or an invariant corner formalism is needed.

Within a finite derivative class, the classification procedure is to enumerate independent tensor structures from $q_{ab},l^a,n^a,\Omega_a$, extrinsic curvature, and the perturbation; impose the boost rule; calculate the complete curl; and determine the kernel of the resulting charge map. Examples that may enter a broader ansatz include terms proportional to $h\Omega_a$, projected components of $h_{ab}\Omega^b$, and derivatives of $h$. They are possible structures to test, not verified admissible solutions. Curvature-dependent coefficients enlarge the class further unless restricted independently.

A defensible uniqueness statement would say that every admissible solution differs from $X^{(0)}$ only by charge-invisible terms and improvements that preserve the same normalized Hamiltonian action. This paper proves the coefficient result of Section 2, but not this complete local classification. Requiring $c=12J$ as an extra selection rule would fix the desired answer rather than derive it.

\section{An exactly integrable classical symmetry model}
A controlled model of the symmetry sector is obtained from Virasoro coadjoint orbits \cite{witten}. Let $\widehat{\mathfrak g}$ be a centrally extended algebra and $b$ an element of its dual. On a coadjoint orbit $\mathcal O_b$, define the Kirillov--Kostant--Souriau form by
\begin{equation}
 \Omega_b(V_u,V_v)=\langle b,[u,v]\rangle,
 \qquad V_u=\operatorname{ad}_u^*b.
\end{equation}
We use $\langle\operatorname{ad}_u^*b,v\rangle=-\langle b,[u,v]\rangle$, so for $\mu_u(b)=\langle b,u\rangle$ one has
\begin{equation}
 \dd\mu_u(V_v)=\langle b,[u,v]\rangle
 =\Omega_b(V_u,V_v).
\end{equation}
The charges are globally specified functions, the symplectic form is closed by the Jacobi identity, and degeneracies associated with the stabilizer are removed in defining the orbit. The centrally extended algebra is built into this construction.

Two such orbits at central values $12J$ yield an explicit integrable classical $\mathrm{Vir}_L\oplus\mathrm{Vir}_R$ model. The orbit labels, stabilizers, and Hamiltonian remain choices. The construction demonstrates that the desired algebra is compatible with an integrable symmetry phase space; it does not derive its central value from gravity.

To promote this model to a completion of HHPS one must construct a map
\begin{equation}\label{map}
 \Phi:\PP_J^{\rm grav}\longrightarrow\mathcal O_L\times\mathcal O_R
\end{equation}
and prove, for the sector claimed to be described,
\begin{equation}
 \Phi^*(\Omega_L+\Omega_R)=\Omega_{\rm grav,red},\qquad
 \al_{\zeta_n}^{\rm grav}=\dd_{\PP}(\mu_n\circ\Phi).
\end{equation}
One must also prove that the map respects the constraints and mode action and state whether it is onto, one-to-one after reduction, or only an embedding of a subsector. Defining the gravitational symplectic form to be the pullback would impose the conclusion. Additional bulk degrees of freedom may require extra symplectic factors or a fibration rather than two orbits.

\section{An explicit quantum algebra and its limitations}
\subsection{Highest-weight realization}
Use the standard quantum Virasoro convention
\begin{equation}\label{vir}
 [L_m,L_n]=(m-n)L_{m+n}+\frac c{12}m(m^2-1)\delta_{m+n,0},
 \qquad L_n^\dagger=L_{-n}.
\end{equation}
The classical cubic convention is related by a zero-mode shift: if $\widetilde L_n=L_n-(c/24)\delta_{n,0}$, the central term is $cm^3/12$. Factors of $i$ in a real classical basis are conventional and must be restored consistently when matching Poisson brackets to commutators.

For $c>1$ and $h>0$, unitary positive-energy highest-weight representations exist \cite{fqs}. Start from a vector $|h\rangle$ satisfying
\begin{equation}
 L_0|h\rangle=h|h\rangle,\qquad L_n|h\rangle=0\quad(n>0).
\end{equation}
Act with the negative modes, use \eqref{vir} to define the Gram form, quotient null vectors if present, and complete the positive inner-product space. For example,
\begin{equation}
 \|L_{-n}|h\rangle\|^2=2nh+\frac c{12}n(n^2-1).
\end{equation}
These individual norms illustrate positivity but are not by themselves a proof of positivity of all Gram matrices; the full construction uses standard unitary representation theory. The algebra acts on the invariant dense domain of finite-level descendants. Claims about global exponentiation and self-adjoint generators require their additional analytic hypotheses.

For $J>1/12$, one may therefore choose $c=12J>1$ and define
\begin{equation}\label{hilb}
 \HH_{\rm alg}=\HH_{c,h_L}\otimes\HH_{c,h_R}.
\end{equation}
The two algebras commute and have the requested central values. This is an explicit algebraic realization, not a quantization map from the gravitational charge functions. It leaves $h_L,h_R$ undetermined and provides no proof that the classical central value has no quantum correction. For smaller $J$, unitary representation restrictions must be reconsidered. Macroscopic Kerr black holes are in the large-$J$ regime in Planck units.

\subsection{Why a module is not a microstate theory}
For a generic nondegenerate module, the character is
\begin{equation}\label{character}
 \chi_{c,h}(q)=q^{h-c/24}\prod_{n=1}^{\infty}(1-q^n)^{-1}.
\end{equation}
At descendant level $N$, the number of states is the partition number $p(N)$, so
\begin{equation}
 \log p(N)=2\pi\sqrt{\frac N6}+O(\log N).
\end{equation}
The large-$N$ descendant coefficient does not grow with $c$. By contrast, the vacuum-dominated Cardy density of a suitable full conformal theory has an exponent involving its effective central charge. At levels of order $c$, the entropy of one generic module is only of order $\sqrt c$, while the black hole entropy sought in the semiclassical regime is of order $c$. Extra primary content is essential.

The same point can be checked in the canonical ensemble. With $q=e^{-\beta}$, modular asymptotics of the Euler product give
\begin{equation}
 \log\chi_{c,h}(e^{-\beta})
 =\frac{\pi^2}{6\beta}+\frac12\log\frac{\beta}{2\pi}
 -\beta\left(h-\frac{c-1}{24}\right)+O(e^{-4\pi^2/\beta}).
\end{equation}
Consequently,
\begin{equation}
 S_{\rm module}=(1-\beta\partial_\beta)\log\chi
 =\frac{\pi^2}{3\beta}+\frac12\log\frac{\beta}{2\pi}-\frac12
 +O(\beta^{-1}e^{-4\pi^2/\beta}).
\end{equation}
The vacuum-energy shift cancels from the entropy. Increasing the algebraic central charge alone does not replace the coefficient $1$ of the oscillator product by $c$.

\begin{proposition}
The values $c_L=c_R=12J$ and existence of unitary representations do not determine the black hole state density.
\end{proposition}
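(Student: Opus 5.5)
The claim is a non-uniqueness statement, so I will prove it by exhibiting two quantum theories that share every datum named in the proposition but have different state densities. Both theories will carry commuting $\mathrm{Vir}_L\oplus\mathrm{Vir}_R$ actions with $c_L=c_R=12J$ and unitary highest-weight content. The working hypothesis is $J>1/12$, so that $c=12J>1$ and the unitary modules of \cite{fqs} used in \eqref{hilb} exist.

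\textbf{Theory A.} This is $\HH_{\rm alg}$ of \eqref{hilb} itself, a single tensor product $\HH_{c,h_L}\otimes\HH_{c,h_R}$ with generic $h_L,h_R>0$. Its character factorizes as $\chi_{c,h_L}(q_L)\chi_{c,h_R}(q_R)$. By the modular asymptotics stated just before the proposition, its canonical entropy is
\[
 S_{\rm A}=\frac{\pi^2}{3}\Bigl(\frac1{\beta_L}+\frac1{\beta_R}\Bigr)+O(\log\beta),
\]
which has no dependence on $c$. Equivalently, in the microcanonical picture the level-$N$ degeneracy in each factor is $p(N)$, with $\log p(N)\sim 2\pi\sqrt{N/6}$.

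\textbf{Theory B.} Take a direct sum $\bigoplus_i \HH_{c,h_i}\otimes\HH_{c,\bar h_i}$ whose multiplicities are chosen so that the spectrum is Cardy-dense. One concrete choice is the spectrum of a modular-invariant CFT with central charge $c$, for example a symmetric-orbifold or lattice-type theory. Another is simply to declare primary multiplicities $d(h)\approx\exp\bigl(2\pi\sqrt{c\,h/6}\bigr)$. This sum is again a unitary representation with the same central charges, since a direct sum of unitary modules is unitary and the central term is unchanged. Its entropy at levels of order $c$ scales like $c$, not $\sqrt c$.

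The two theories agree on $c_L$, $c_R$ and unitarity, yet their densities differ. At fixed $N$ with $N/c$ held fixed they differ already in leading order, $\sqrt{c}$ versus $c$. So no function of those data alone determines the state density.

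I would order the argument as follows:
\begin{enumerate}
\item Fix $c=12J>1$ and recall that unitary modules exist.
\item Define the two representations and verify that each carries the stated algebra with the stated central values. For the direct sum this is immediate, because the Virasoro action is defined summand-wise.
\item Compute the two densities. Theory A follows from \eqref{character} and the partition asymptotics. For Theory B it suffices to exhibit a lower bound: the number of primaries alone below a given level already exceeds $p(N)^2$ for large $c$.
\item Conclude.
\end{enumerate}

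The main obstacle is making Theory B rigorous without importing an unproved full CFT. I would avoid modular invariance entirely. The proposition only needs some unitary representation with the same central charges and a different density, and arbitrary multiplicities of allowed unitary modules $h>0$ give one. A cheaper fallback is available even without large degeneracy: compare $\HH_{c,h}$ with $\HH_{c,h}\oplus\HH_{c,h}$. Here the densities differ by a factor of $2$ at every level, and that already shows the density is not determined.

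I would also note, as a caveat, that the proposition is a statement about the symmetry data and not about gravity. Neither theory is claimed to be the Kerr microstate theory.
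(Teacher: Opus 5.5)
Your proposal is correct and matches the paper's proof. Both rest on the fact that fixing $c_L=c_R=12J$ and unitarity leaves the set of primary modules and their multiplicities free, and your fallback comparison of $\HH_{c,h}$ with $\HH_{c,h}\oplus\HH_{c,h}$ is exactly the paper's ``single tensor product versus a finite direct sum of copies'' argument; your Cardy-dense Theory B is a quantitative sharpening that the paper only indicates through its character calculation, showing the discrepancy is leading order ($\sqrt c$ versus $c$ at levels of order $c$) rather than a constant factor.
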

\begin{proof}
A single tensor product \eqref{hilb} and a finite direct sum of several copies obey identical Virasoro commutators, but have different degeneracies. Varying the highest weights changes the spectrum without changing $c$. More generally, the algebra does not specify the set of primary modules or their multiplicities. The character calculation shows that descendants alone do not automatically supply the large-$c$ Cardy density.
\end{proof}
This is an underdetermination result for the symmetry data. It does not rule out a gravitational theory with a uniquely determined and sufficiently rich spectrum. Such a theory must derive that spectrum rather than choose it to match the area.

\section{From quantum symmetry to area entropy}
A prospective full quantum space has a decomposition of the form
\begin{equation}
 \HH_J=\bigoplus_{h_L,h_R}N_{h_L,h_R}
 \HH_{c,h_L}\otimes\HH_{c,h_R},
 \qquad N_{h_L,h_R}\in\mathbb Z_{\ge0},
\end{equation}
or an appropriate direct integral for a continuous spectrum. Virasoro symmetry does not determine the multiplicities or measure. A local CFT completion requires compatibility with operator products, locality, level matching where applicable, and modular consistency. These are independent constraints.

If a suitable modular completion has a normalizable vacuum and vacuum dominance in the relevant transformed channel, then the high-temperature canonical formula \cite{cardy} is
\begin{equation}\label{cardy}
 S=\frac{\pi^2}{3}(c_LT_L+c_RT_R)+\text{subleading terms}.
\end{equation}
If the lowest weights are nonzero or the spectrum is noncompact, the effective central charge and spectral measure require separate attention. Large $c$ alone does not guarantee applicability at all temperatures; sparse-spectrum hypotheses extend universality only under defined conditions \cite{hks}. In particular the near-extremal right-moving temperature is small, so a two-sector high-temperature argument is not uniformly valid for all Kerr parameters.

Assuming \eqref{cardy} with the central values above,
\begin{align}
 S&=4\pi^2J(T_L+T_R)
 =4\pi^2J\frac{r_+}{2\pi a}
 =2\pi Mr_+,\\
 A&=4\pi(r_+^2+a^2)=8\pi Mr_+,
 \qquad S=\frac A4.
\end{align}
The last equality uses the horizon equation $r_+^2-2Mr_++a^2=0$. This is an exact algebraic check of the leading formula under its statistical assumptions, not an exact microscopic result including finite-$J$ corrections. With physical constants, $S_{\rm BH}/k_B=A/(4\ell_P^2)$.

A useful numerical illustration is $M=2$, $a=1$, $J=2$. Then $r_+=2+\sqrt3$, $c_L=c_R=24$, $T_L=1/\pi$, and $T_R=\sqrt3/(2\pi)$. The conditional Cardy expression gives $4\pi(2+\sqrt3)$, equal to $A/4$. These temperatures are not parametrically high; numerical agreement is an algebraic identity and cannot validate the ensemble assumptions.

\section{Gravitational gluing and physical states}
Boundary modes can be physical for a subregion while full diffeomorphisms remain constraints after gluing. An extended phase space with corner embedding and normal-frame variables is a concrete starting point \cite{df}; it does not alone identify the HHPS Virasoro sector or its entropy.

Suppose an extended quantum description permits sector decompositions
\begin{equation}
 \HH_{\rm in}^{\rm ext}=\bigoplus_\lambda M_\lambda^{\rm in}\otimes V_\lambda,
 \qquad
 \HH_{\rm out}^{\rm ext}=\bigoplus_\lambda M_\lambda^{\rm out}\otimes V_\lambda^*.
\end{equation}
Here $M_\lambda$ denotes multiplicity spaces and $V_\lambda^*$ is the appropriate algebraic dual, with opposite central action under the dual representation. When gluing is mathematically defined, invariant contraction removes the representation factor but does not in general remove its multiplicity spaces. Even for finite-dimensional unitary groups, singlet constraints leave factors $M_\lambda^{\rm in}\otimes M_\lambda^{\rm out}$.

For Virasoro representations, formal invariant pairings may be distributional rather than normalizable. For an infinite orthonormal basis, the schematic singlet $\sum_i|i\rangle\otimes|i\rangle^*$ has divergent norm. A regulator and a rigging or gluing map are therefore required. If interior and exterior constraints are incorrectly assigned same-sign central terms, their algebra is anomalous. Consistent orientation and dualization, or compensating sectors, must be established.

Thus gravitational constraints do not by themselves prove an isometric identification of the entire physical Hilbert space with an exterior Hilbert space. They enforce matching conditions. A microscopic completion must calculate the resulting physical inner product, account for multiplicities, and define which entropy is counted. This also prevents counting gauge copies as independent microstates.

\section{An explicit integrability test on the Kerr solution family}\label{kerrcurl}
The general criterion can be evaluated on an actual two-dimensional family of vacuum solutions. Work in geometrized Planck units and define
\begin{equation}
 D=\sqrt{M^4-J^2},\qquad M>0,\quad 0<J<M^2.
\end{equation}
The Kerr area, surface gravity, and horizon angular velocity become
\begin{equation}\label{kerrfunctions}
 A=8\pi(M^2+D),\qquad
 \kappa=\frac{D}{2M(M^2+D)},\qquad
 \Omega_H=\frac{J}{2M(M^2+D)}.
\end{equation}
Their derivatives give
\begin{equation}
 \partial_M A=16\pi M\left(1+\frac{M^2}{D}\right),\qquad
 \partial_J A=-\frac{8\pi J}{D}.
\end{equation}
Multiplying by $\kappa/(8\pi)$ verifies the stationary first law directly:
\begin{equation}\label{firstlawform}
 \frac{\kappa}{8\pi}\dd A=\dd M-\Omega_H\dd J.
\end{equation}
The relevant horizon generator is $\chi[g]=\partial_t+\Omega_H[g]\partial_\phi$. With the field-dependent surface-charge prescription that excludes the spurious $\mathbf Q_{\delta\chi}$ contribution, its variation is the one-form on the right of \eqref{firstlawform}. It is not the total derivative of $M-\Omega_HJ$, which would include $-J\dd\Omega_H$.

\subsection{A nonzero curl on a genuine solution space}
Let $\al_\chi=\dd M-\Omega_H\dd J$. Differentiation of \eqref{kerrfunctions} yields
\begin{equation}
 \partial_M\Omega_H=-\Omega_H\left(\frac1M+\frac{2M}{D}\right).
\end{equation}
Consequently
\begin{equation}\label{kerrnonclosed}
 \boxed{\dd\al_\chi=
 \Omega_H\left(\frac1M+\frac{2M}{D}\right)\dd M\wedge\dd J.}
\end{equation}
For $J>0$ this is strictly nonzero throughout the nonextremal domain. Thus the family-dependent horizon generator does not define a single integrable charge on this entire Kerr family in this normalization. The fixed asymptotic generators $\partial_t$ and $\partial_\phi$ still have integrable mass and angular-momentum charges. The obstruction concerns their particular field-dependent combination; it is not a failure of the first law or of Kerr mechanics.

Restriction to fixed $J$ removes the obstruction and gives $\al_\chi=\dd M$. Restriction to a one-dimensional curve also removes any two-form test by dimensionality. Neither restriction establishes integrability of the nonzero HHPS modes on the larger phase space.

\subsection{A finite-loop check}
Consider the positively oriented rectangle $[2,2+e]\times[2,2+e]$ in $(M,J)$ space. Since the horizontal segments give canceling integrals of $\dd M$, the exact circulation is
\begin{equation}
 \oint_{\partial R_e}\al_\chi=
 \int_{2}^{2+e}[\Omega_H(2,J)-\Omega_H(2+e,J)]\dd J.
\end{equation}
Stokes' theorem predicts its ratio to $e^2$ tends to the curl coefficient at $(2,2)$, namely $0.1108439182$. Composite Simpson quadrature gives:
\begin{center}
\begin{tabular}{rrr}
\toprule
$e$ & $\oint\al_\chi$ & $e^{-2}\oint\al_\chi$\\
\midrule
0.1000 & $1.023087745\times10^{-3}$ & 0.1023087745\\
0.0500 & $2.659333740\times10^{-4}$ & 0.1063733496\\
0.0250 & $6.784642463\times10^{-5}$ & 0.1085542794\\
0.0125 & $1.713828434\times10^{-5}$ & 0.1096850198\\
\bottomrule
\end{tabular}
\end{center}
The circulation is not a topological period of a closed form; it is an explicit local failure of closure. This calculation tests a gravitational zero-mode combination, not the full HHPS charge integrands.

\subsection{An integrating factor and its meaning}
Since $\kappa>0$ in this domain,
\begin{equation}\label{integratingfactor}
 \beta=\frac{\al_\chi}{\kappa}=\frac{\dd A}{8\pi},
 \qquad \dd\beta=0,\qquad B=\frac A{8\pi}+B_0.
\end{equation}
The area is a globally defined function, so the periods of $\beta$ vanish. Rescaling the generator to $\chi/\kappa$ requires consistent treatment of its field dependence. Algebraically, its adjusted charge variation is the form in \eqref{integratingfactor}. Multiplication by $2\pi$ gives the entropy variation $\dd S=\dd A/4$. This establishes an integrable area-related zero mode and makes the normalization dependence explicit. It does not construct a state degeneracy or solve the nonzero-mode problem.

\section{A smooth isolated-horizon sector and its charge algebra}\label{smoothsector}
We now calculate in a regular sector different from the oscillatory HHPS one. Near a null surface, use Gaussian null coordinates with
\begin{equation}\label{gnc}
 \dd s^2=-2\kappa\rho\dd v^2+2\dd v\dd\rho
 +2\rho\vartheta_A\dd v\dd x^A
 +(q_{AB}+\rho\lambda_{AB})\dd x^A\dd x^B+O(\rho^2).
\end{equation}
Let $a=\sqrt{\det q_{AB}}$ denote the area density. In this section it is not the Kerr rotation parameter. Smooth parameters are a scalar $T(x)$ and a vector field $Y^A(x)$ on a closed horizon section. Leading residual generators follow by preserving $g_{\rho\rho}=g_{\rho A}=0$ and $g_{v\rho}=1$:
\begin{equation}
 \xi^v=T,\qquad \xi^A=Y^A-\rho q^{AB}\partial_BT+O(\rho^2),
 \qquad \xi^\rho=\tfrac12\rho^2\vartheta^A\partial_AT+O(\rho^3).
\end{equation}
Choose nonexpanding data with $\partial_vq_{AB}=0$, $\partial_v\vartheta_A=0$, and constant $\kappa$ fixed across the sector. Existing gravitational surface-charge results give \cite{donnay}
\begin{equation}\label{smoothalpha}
 \al_{T,Y}=\frac1{16\pi G}\int_S
 [2\kappa T\,\delta a-Y^A\delta(a\vartheta_A)]\dd^2x.
\end{equation}
Here the leading symmetry parameters are held fixed under variation, while the subleading field dependence of the generators is handled with adjusted brackets. These conditions define a restricted boundary sector; remaining field equations must hold for any proposed bulk extension. We do not assert that arbitrary choices of $q_{AB}$ and $\vartheta_A$ extend to globally asymptotically flat Kerr exteriors.

\subsection{Nonlinear exactness within the stated sector}
Equation \eqref{smoothalpha} immediately integrates to
\begin{equation}\label{smoothQ}
 Q_{T,Y}=\frac1{16\pi G}\int_S a(2\kappa T-Y^A\vartheta_A)\dd^2x+Q^0_{T,Y}.
\end{equation}
No linear expansion of $a$ or $\vartheta_A$ is used: the expression is nonlinear in the boundary data. For commuting variations, the antisymmetrized second variation vanishes because $\delta_1\delta_2a=\delta_2\delta_1a$ and the same identity holds for $a\vartheta_A$. Thus
\begin{equation}
 \mathcal I_{T,Y}(\delta_1,\delta_2)=0.
\end{equation}
The explicit primitive proves zero periods on every connected chart where the smooth boundary fields and the reference constants are single valued. This is an integrability result for the specified charge one-form on allowed variations, without an assertion that the full gravitational reduced symplectic form has been reconstructed.

For $Y=0$ and $T=1$, $Q=\kappa A/(8\pi G)$ at fixed $\kappa$. Dividing this charge by the Hawking temperature $\kappa/(2\pi)$ gives $A/(4G)$. This classical zero-mode identity does not count quantum states.

\subsection{Direct calculation of the central term}
At the boundary, the transformations read
\begin{equation}\label{boundarytransform}
 \delta_{T,Y}a=\mathcal L_Ya,\qquad
 \delta_{T,Y}\vartheta_A=\mathcal L_Y\vartheta_A-2\kappa\partial_AT.
\end{equation}
Let $P_A=a\vartheta_A$ be a covector density and fix $Q^0=0$. The adjusted parameter bracket is
\begin{equation}
 T_{12}=Y_1^A\partial_AT_2-Y_2^A\partial_AT_1,
 \qquad Y_{12}=[Y_1,Y_2].
\end{equation}
For smooth fields on a closed surface, integration by parts gives
\begin{align}
 \delta_1Q_2&=\frac1{16\pi G}\int_S\big[
 2\kappa T_2\mathcal L_{Y_1}a-Y_2^A\mathcal L_{Y_1}P_A
 +2\kappa aY_2^A\partial_AT_1\big]\dd^2x\\
 &=\frac1{16\pi G}\int_S\big[-2\kappa aT_{12}+P_AY_{12}^A\big]\dd^2x
 =-Q_{12}.
\end{align}
With the bracket convention $\{Q_1,Q_2\}=-\delta_1Q_2$, we find
\begin{equation}\label{smoothalgebra}
 \boxed{\{Q_1,Q_2\}=Q_{12},\qquad K_{12}=0.}
\end{equation}
A different constant reference charge changes the algebra by a coboundary, not a nontrivial cubic Virasoro cocycle. Smoothness and the absence of puncture boundaries are essential to the integration by parts.

This result is a useful negative test: an integrable smooth horizon sector does not automatically reproduce the HHPS central extension. Its charge algebra is a semidirect product of surface diffeomorphisms with scalar translations. On a sphere, globally smooth conformal vector fields form the finite-dimensional conformal algebra; unrestricted local Laurent modes need singularities or punctures. No two nontrivial Virasoro central charges have been derived in this smooth closed-surface calculation.

\section{A precise regularity obstacle to identifying the sectors}\label{regularity}
The HHPS right-moving generator is
\begin{equation}
 \zeta_n=\epsilon_n\partial_++\tfrac12y\epsilon_n'\partial_y,
 \qquad \epsilon_n=\alpha w^+\exp\left(\frac{in}{\alpha}\log w^+\right),
 \qquad \alpha=2\pi T_R>0.
\end{equation}
For $w^+>0$, direct differentiation gives
\begin{align}
 \epsilon_n'&=(\alpha+in)\exp\left(\frac{in}{\alpha}\log w^+\right),\\
 \epsilon_n''&=\frac{in(\alpha+in)}{\alpha w^+}
 \exp\left(\frac{in}{\alpha}\log w^+\right).
\end{align}
For $n\ne0$, $\epsilon_n'$ has no limit as $w^+\to0^+$ and $\epsilon_n''$ has magnitude proportional to $1/w^+$. In a horizon chart at finite $y>0$, the component $\zeta_n^y=y\epsilon_n'/2$ is already oscillatory without a limiting value. The nonzero modes therefore are not smooth vector fields extending to that bifurcation surface in this chart. The same reasoning applies to the left sector.

A smooth Gaussian-null symmetry orbit cannot simply be identified with these modes by matching its charge labels. A proposed map must specify a weaker space of fields, a regulated surface prescription, or an excised boundary. It must then show that the symplectic form and nonlinear transformations are well defined there. These derivatives explain why the regulator is structural, not merely a way of evaluating an otherwise regular integral. They do not prove that a controlled weak or regulated HHPS phase space is impossible.

An elementary integral illustrates the issue. For one multiplicative azimuthal interval,
\begin{equation}
 \int_\epsilon^{\epsilon e^{2\pi\alpha}}\frac{\dd w}{w}
 =2\pi\alpha
\end{equation}
is finite and independent of $\epsilon$, although its integrand has a pole. For $n\ne0$,
\begin{equation}
 \int_\epsilon^{\epsilon e^{2\pi\alpha}}
 w^{in/\alpha}\frac{\dd w}{w}
 =\frac{\alpha}{in}\epsilon^{in/\alpha}(e^{2\pi in}-1)=0.
\end{equation}
The mode-projected integral exists while pointwise boundary data do not. This distinction is consistent with a finite central coefficient, but it supplies no dominated-convergence bound for arbitrary variations or a proof of nonlinear closure.

\section{An explicit local charge-improvement ambiguity}\label{improvement}
Integrability alone admits improvements that can evade zero-mode tests. Let $q_{AB}$ be a smooth metric on a closed two-sphere and $f$ a fixed scalar smearing function. Introduce a local functional
\begin{equation}\label{curvatureimprovement}
 F_f[q]=\lambda\int_S f R[q]\sqrt q\dd^2x.
\end{equation}
Adding $\dd_{\PP}F_f$ to a charge one-form preserves closure. This is an explicit ambiguity at the level of integrable boundary charges; it is not asserted to be representable by the restricted HHPS spacetime one-form $X_a$, or to preserve the same Hamiltonian action.

For $u_{AB}=\delta q_{AB}$ and $u=q^{AB}u_{AB}$, the two-dimensional Einstein tensor vanishes identically. The variation of curvature density consequently gives
\begin{equation}
 \delta F_f=\lambda\int_S\sqrt q
 \left[u_{AB}D^AD^Bf-uD^2f\right]\dd^2x.
\end{equation}
Boundary terms vanish only because $S$ is closed and the fields are smooth. For constant $f$, $F_f=4\pi\lambda f\,\chi(S)$ by Gauss--Bonnet, so $\delta F_f=0$. Such an improvement is invisible to the constant zero mode while changing nonconstant modes.

To evaluate it exactly on a conformal family, set $q_{AB}=e^{2s}\bar q_{AB}$ for a unit round sphere. Then
\begin{equation}
 \sqrt qR[q]=\sqrt{\bar q}(2-2\bar\Delta s),\qquad
 \delta F_f=-2\lambda\int_S\sqrt{\bar q}(\bar\Delta f)\delta s\dd^2x.
\end{equation}
For a real spherical harmonic normalized by $\int_S\sqrt{\bar q}\,Y_{\ell m}^2\dd^2x=1$, choose $f=Y_{\ell m}$ and $s=uY_{\ell m}$. Since $\bar\Delta Y_{\ell m}=-\ell(\ell+1)Y_{\ell m}$,
\begin{equation}
 \delta F_f=2\lambda\ell(\ell+1)\dd u.
\end{equation}
The quadrupole $\ell=2$ changes the charge by $12\lambda\dd u$ even though the constant smearing sees no change. This is a concrete homogeneous solution of the charge-closure equation. Requiring the original symmetry action with the same reduced symplectic form excludes nonconstant $F_f$ by the Hamiltonian uniqueness proposition. Thus action matching, not closure or zero-mode entropy alone, is the needed discriminator.

At the central-coefficient level a separate test also shows the limitation of temperature independence. Allow an algebraic extension of the ansatz,
\begin{align}
 K_R^X&=J\left[u\frac{T_L-T_R}{T_L+T_R}+v\right]m^3\delta_{m+n,0},\\
 K_L^X&=J\left[u\frac{T_R-T_L}{T_L+T_R}+v\right]m^3\delta_{m+n,0}.
\end{align}
The total coefficients are $J[u+v+2(1-u)x]$ and $J[2-u+v+2(u-1)x]$. Constancy selects $u=1$ but leaves $v$ unrestricted, with $c_L=c_R=12J(1+v)$. This is an algebraic underdetermination example. Existence of a local admissible counterterm producing $v\ne0$ has not been demonstrated; that existence question belongs to the local classification. The example explains why constancy by itself cannot replace the classification theorem.

\section{A gravitational corner polarization and its regulated spectrum}\label{cornerspectrum}
An area--boost canonical pair is grounded in gravitational corner potentials, rather than chosen only to fit Virasoro commutators. For a timelike stretched boundary meeting a spacelike slice, a standard relative corner polarization has potential \cite{fgp,hayward}
\begin{equation}\label{cornerpotential}
 \Theta_S=\frac1{8\pi G}\int_S a\,\delta\eta\dd^2x,
 \qquad \Omega_S=\delta\Theta_S
 =\frac1{8\pi G}\int_S\delta a\wedge\delta\eta\dd^2x.
\end{equation}
Here $\eta$ is a real Lorentzian boost angle. This is a relative corner contribution for a specified boundary formulation, not the entire gravitational symplectic form. Its retention and the null limit require an explicit choice of boundary data. The following calculation quantizes this component with a finite cell regulator; it does not solve bulk constraints or physical gluing.

Partition the corner into $N$ cells, retaining their positive areas $A_i$ and constant boost angles $\eta_i$. The regulated form is
\begin{equation}
 \Omega_N=\frac1{8\pi G}\sum_{i=1}^N\delta A_i\wedge\delta\eta_i.
\end{equation}
Choose a positive reference area $A_*$ and define
\begin{equation}
 q_i=\log(A_i/A_*),\qquad p_i=\frac{A_i\eta_i}{8\pi G}.
\end{equation}
Direct substitution gives the Darboux form
\begin{equation}
 \Omega_N=\sum_i\delta q_i\wedge\delta p_i.
\end{equation}
Quantize on $\HH_N=L^2(\mathbb R^N,\dd^Nq)$ with $\hat q_i$ acting by multiplication and $\hat p_i=-i\hbar\partial_{q_i}$. The positive cell-area operators are
\begin{equation}
 \hat A_i=A_*e^{q_i},\qquad \hat A_{\rm tot}=A_*\sum_i e^{q_i}.
\end{equation}
Their natural maximal multiplication domains give self-adjoint positive operators. The spectrum of each cell area and total area is $[0,\infty)$; zero is an accumulation point, not a normalizable eigenvalue. Positive area values also have no normalizable eigenvectors in this measure. Positivity of area does not imply a discrete area spectrum.

If, at fixed $\kappa$, one uses the corner Hamiltonian
\begin{equation}
 \hat H_\kappa=\frac{\kappa}{8\pi G}\hat A_{\rm tot},
\end{equation}
its spectrum is continuous and the Gibbs operator is not trace class. To see this without a formal $\delta(0)$, choose a bounded set $E$ of positive measure where $A_{\rm tot}\le A_0$. There exists an infinite orthonormal family $\psi_n$ supported in $E$. For $\beta>0$,
\begin{equation}
 \langle\psi_n,e^{-\beta\hat H_\kappa}\psi_n\rangle
 \ge e^{-\beta\kappa A_0/(8\pi G)}>0.
\end{equation}
The sum over this family diverges. Thus $\operatorname{Tr}e^{-\beta\hat H_\kappa}=\infty$. The fixed-area level set is likewise not a finite-dimensional microcanonical eigenspace.

This finite-regulator calculation determines a genuine kinematic spectrum for the chosen corner polarization. It shows that this component alone cannot provide a finite black hole state count. The cell regulator controls the number of canonical pairs, but does not compactify their phase space or fix their quantum multiplicities. Additional constraints, dynamics, or an independently motivated alternative representation are required. Imposing periodicity on the real boost angle merely to obtain discrete areas would be an added assumption: Euclidean angular periodicity is not automatically a compact Lorentzian canonical coordinate.

\section{Further representation and state-counting calculations}\label{quantchecks}
The Virasoro realization can be checked beyond individual descendant norms. At level two, the basis $(L_{-2}|h\rangle,L_{-1}^2|h\rangle)$ has Gram matrix
\begin{equation}
 G_2=\begin{pmatrix}4h+c/2&6h\\6h&4h(2h+1)\end{pmatrix},
\end{equation}
obtained solely from the commutators in \eqref{vir}. Its determinant is
\begin{equation}
 \det G_2=4h\left[8h^2+(c-5)h+\frac c2\right].
\end{equation}
For $c=24$, $h=1$, this is $156$ and the eigenvalues are $14\pm\sqrt{40}$, approximately $7.67544468$ and $20.32455532$. This verifies positivity at level two in an example. All-level positivity uses the representation theorem; a finite-level calculation does not replace it.

\subsection{Exact descendant multiplicities}
For a generic module with no null descendants, the generating product \eqref{character} gives exact partition numbers. The following values are calculated by integer dynamic programming:
\begin{center}
\begin{tabular}{rrrr}
\toprule
$N$ & $p(N)$ & $\log p(N)$ & $2\pi\sqrt{24N/6}$\\
\midrule
10 & 42 & 3.737670 & 39.738353\\
25 & 1958 & 7.579679 & 62.831853\\
50 & 204226 & 12.226983 & 88.857659\\
100 & 190569292 & 19.065526 & 125.663706\\
\bottomrule
\end{tabular}
\end{center}
The last column is the leading $c=24$ Cardy benchmark at cylinder energy $E=N$, not a prediction for a single module or an accurate finite-level result for an unspecified CFT. Choosing $h=1$ makes $E=N+h-c/24=N$ in this algebraic example. There is no gravitational identification of these example levels with $M$ and $J$.

For two copies at $N_L=N_R=100$, the module count is $p(100)^2$, with entropy $38.131053$. The two-sector leading Cardy benchmark is $251.327412$. Matching the leading exponent by adding equal-energy copies at that level would require an enhancement factor of approximately $\exp(213.196359)$. This factor is a diagnostic of missing spectral content, not a multiplicity prescription. Deriving primary modules and their energies is indispensable.

\subsection{A quantitative low-temperature obstruction}
Introduce the Kerr dimensionless spin $j=J/M^2\in(0,1)$ and $s=\sqrt{1-j^2}$. The conformal temperatures are
\begin{equation}
 T_L=\frac1{2\pi j},\qquad T_R=\frac{s}{2\pi j}.
\end{equation}
The conventional modular self-dual scale for a circle of circumference $2\pi$ is $T_*=1/(2\pi)$. While $T_L>T_*$ throughout the nonextremal interval, the right temperature satisfies
\begin{equation}
 T_R>T_*\quad\Longleftrightarrow\quad j<\frac1{\sqrt2}.
\end{equation}
For $j>1/\sqrt2$, the right sector lies below that scale; near extremality it approaches zero. This inequality is not by itself a proof of vacuum dominance even in the other region. It establishes that a uniform argument assuming both sectors are above the modular crossover fails for rapidly rotating holes.

\begin{center}
\begin{tabular}{rrr}
\toprule
$j$ & $T_L/T_*$ & $T_R/T_*$\\
\midrule
0.50 & 2.000000 & 1.732051\\
0.80 & 1.250000 & 0.750000\\
0.95 & 1.052632 & 0.328684\\
0.99 & 1.010101 & 0.142492\\
\bottomrule
\end{tabular}
\end{center}
A microscopic treatment in the mixed-temperature regime must analyze its actual left--right spectrum. Continuing the canonical Cardy expression into this region simply because it yields $A/4$ would use the entropy match as an assumption.

\section{Mixed parameter--mode variations and moving regulators}\label{mixedmodes}
The regularity calculation can be strengthened by varying the conformal temperature. Set $\alpha=2\pi T_R$ and use the same identification of the local $w=w^+$ chart across configurations. The mode components are
\begin{equation}
 \zeta_n^w=\alpha w e^{in\log w/\alpha},\qquad
 \zeta_n^y=\frac y2(\alpha+in)e^{in\log w/\alpha}.
\end{equation}
Their parameter derivatives at fixed $(w,y)$ are
\begin{align}\label{mixedderivatives}
 \partial_\alpha\zeta_n^w
 &=w e^{in\log w/\alpha}\left(1-\frac{in\log w}{\alpha}\right),\\
 \partial_\alpha\zeta_n^y
 &=\frac y2 e^{in\log w/\alpha}
 \left[1-\frac{in(\alpha+in)\log w}{\alpha^2}\right].
\end{align}
For $n\ne0$, the second expression grows logarithmically near $w=0$. A space that admits bounded oscillatory $\zeta_n^y$ need not admit these mixed variations. At fixed $J$, the temperature does vary with $M$:
\begin{equation}
 \alpha=\frac{\sqrt{M^4-J^2}}{J}=\frac DJ,
 \qquad \partial_M\alpha=\frac{2M^3}{JD}>0.
\end{equation}
Thus these derivatives are relevant on the fixed-$J$ family used in the central-charge analysis. Freezing $J$ does not freeze the conformal mode frame. Terms containing $\delta\zeta_n$ in a field-dependent charge prescription cannot be omitted on this family.

\subsection{A cancellation that must precede the horizon limit}
Consider the regulated kernel already encountered in Section \ref{regularity}:
\begin{equation}
 I_n(\alpha,\epsilon)=\int_\epsilon^{b(\alpha,\epsilon)}
 w^{in/\alpha}\frac{\dd w}{w},\qquad
 b(\alpha,\epsilon)=\epsilon e^{2\pi\alpha}.
\end{equation}
For $n\ne0$, $I_n=0$ exactly. Differentiating before taking any limit produces two nonzero pieces:
\begin{align}
 E_n&=\left.w^{in/\alpha}w^{-1}\partial_\alpha b\right|_{w=b}
 =2\pi\epsilon^{in/\alpha},\\
 B_n&=\int_\epsilon^b\left[-\frac{in}{\alpha^2}\log w\right]
 w^{in/\alpha}\frac{\dd w}{w}
 =-2\pi\epsilon^{in/\alpha}.
\end{align}
The equality for $B_n$ follows by setting $u=\log w$ and integrating $u e^{inu/\alpha}$ over an interval of length $2\pi\alpha$. Therefore $E_n+B_n=0$, although neither term has a limit as $\epsilon\to0$ for nonzero $n$. For $n=0$, the bulk derivative is zero and the endpoint contribution is $2\pi$, correctly differentiating $I_0=2\pi\alpha$.

This is an exact mixed-variation calculation for the toy pole kernel, not for the entire HHPS charge. It demonstrates why treating the cutoff surface as fixed while varying the metric-dependent azimuthal identification can produce a false obstruction or miss a real one. The complete gravitational calculation must retain both moving-surface and varying-generator terms before the limit.

Using $x=\log w/\alpha$ instead makes the modes $e^{inx}$ independent of $\alpha$, but changes the identification of spacetime points: $w=e^{\alpha x}$. The corner embedding and its variation then carry the missing dependence. A choice of field-space connection for the mode frame must be accompanied by the corresponding embedding terms. It is not a proof of integrability to set those terms to zero by notation.

\section{A weighted-space test of pole-trace continuity}\label{weightedtest}
A possible weak completion must accommodate oscillatory modes and their parameter derivatives. For a scalar component, take the weighted norm
\begin{equation}
 \|f\|_{H^1_p}^2=\int_0^1 w^p(|f|^2+|f'|^2)\dd w.
\end{equation}
For a nonzero Fourier frequency $k$, $f=w^{ik}$ has $|f'|=|k|/w$. Its norm is finite when $p>1$. A parameter derivative contributes $f\log w$; its derivative behaves as $(1+ik\log w)f/w$ and is also square integrable for $p>1$. Weighted spaces of this type therefore pass an elementary membership test for individual modes and mixed variations.

Membership is insufficient to control the regulated charge kernel. Let $C=e^{2\pi\alpha}$ and choose a nonnegative, nonzero $\varphi\in C_c^\infty((1,C))$. For small enough $\epsilon$, define $f_\epsilon(w)=\varphi(w/\epsilon)$. Substitution gives
\begin{equation}
 \|f_\epsilon\|_{H^1_p}^2=
 \epsilon^{p+1}\int_1^C u^p|\varphi(u)|^2\dd u
 +\epsilon^{p-1}\int_1^C u^p|\varphi'(u)|^2\dd u
 \longrightarrow0\quad(p>1).
\end{equation}
Nevertheless the regulated pole reading is independent of $\epsilon$:
\begin{equation}
 L_\epsilon(f_\epsilon)=\int_\epsilon^{C\epsilon}
 \frac{f_\epsilon(w)}{w}\dd w
 =\int_1^C\frac{\varphi(u)}u\dd u>0.
\end{equation}
Consequently the family $L_\epsilon$ has no uniform boundedness on this weighted space as $\epsilon\to0$. Its operator norms grow at least as $\epsilon^{-(p-1)/2}$.

This is a precise function-space diagnostic, not an assertion that these test functions are gravitational solution perturbations. The test shows that simply choosing $p>1$ to admit the HHPS poles does not supply a uniform estimate for all data in that space. A physical phase space might exclude the test family through constraints, or add a norm on the periodic boundary profiles. Either alternative must be stated and proved compatible with the field equations and nonlinear transformations. A finite central integral does not provide this missing estimate.

A useful mathematical target is a space equipped with both a weighted bulk norm and an explicit norm on the boundary Fourier data. Its temperature-dependent periodic profiles form a bundle over the stationary parameters. A connection identifies those profiles under mass variation. The previous section shows that the resulting embedding and connection terms must enter the charge calculation. We have identified a candidate class of estimates to seek, not constructed a closed nonlinear gravitational space with them.

\section{A local symplectic completion of a parameter-dependent mode plane}\label{mixedsymplectic}
The phase-space identification also has a mixed-direction consistency test. At fixed central charge, a Virasoro leaf may have a zero-mode label that varies with the stationary parameters. For a mode pair, its reference KKS pairing is proportional, in the standard shifted convention, to
\begin{equation}
 A_n(M)=2n h(M)+\frac c{12}n(n^2-1),\qquad n\ge1.
\end{equation}
The zero-mode label $h(M)$ is not derived from gravity here. Equivalently the cubic convention uses $b(M)=h(M)-c/24$ and $A_n=2nb+(c/12)n^3$. Real mode-plane coordinates can be normalized so that the reference pairing is $A_n\delta x\wedge\delta y$.

For clarity retain just one plane with $A(M)>0$. Adding a stationary canonical pair naively would give
\begin{equation}
 \Omega_{\rm naive}=\delta M\wedge\delta\tau+A(M)\delta x\wedge\delta y.
\end{equation}
Its exterior derivative is
\begin{equation}
 \delta\Omega_{\rm naive}=A'(M)\delta M\wedge\delta x\wedge\delta y,
\end{equation}
so it is not symplectic when $A'$ is nonzero. Merely taking a union of correctly normalized leaves does not define a symplectic phase space.

A local exact completion is obtained from
\begin{equation}
 \Theta=M\delta\tau+\tfrac12A(M)(x\delta y-y\delta x).
\end{equation}
Differentiation supplies the necessary mixed terms:
\begin{equation}\label{completedform}
 \Omega=\delta M\wedge\delta\tau+A\delta x\wedge\delta y
 +\tfrac12A'\delta M\wedge(x\delta y-y\delta x).
\end{equation}
It is closed by construction. With $X=\sqrt A\,x$ and $Y=\sqrt A\,y$, one verifies
\begin{equation}
 \delta X\wedge\delta Y=A\delta x\wedge\delta y
 +\tfrac12A'\delta M\wedge(x\delta y-y\delta x),
 \qquad \Omega=\delta M\wedge\delta\tau+\delta X\wedge\delta Y.
\end{equation}
This form is nondegenerate wherever $A>0$. Its square is $2A\delta M\wedge\delta\tau\wedge\delta x\wedge\delta y$.

The calculation supplies an explicit local solution to a mixed symplectic closure problem. It is an auxiliary finite-dimensional model of one tangent plane, not a finite-mode nonlinear Virasoro truncation: generic retained modes do not form a closed subalgebra. We have neither derived $\tau$ as an allowed gravitational edge variable nor calculated the gravitational coefficient $A(M)$ or the corresponding cross terms. The result establishes exactly what a phase-space map would have to reproduce, including mixed components, rather than permitting a match only on fixed-parameter leaves.

\section{Explicit modular spectra with identical central charges}\label{modularcounterexample}
There is also a constructive way to test whether quantum consistency and $c_L=c_R=12J$ select a spectrum. Compact free bosons provide unitary local CFTs with radius moduli and modular-invariant torus partition functions \cite{narain,mw}. We use them only as counterexamples to spectral selection by central charge and modularity. No identification with the Kerr gravitational phase space is asserted.

In units $\alpha'=1$, one compact boson of radius $R>0$ has
\begin{equation}
 p_L=\frac nR+wR,\quad p_R=\frac nR-wR,
 \quad h=\frac{p_L^2}{4}+N_L,\quad \bar h=\frac{p_R^2}{4}+N_R,
 \qquad n,w\in\mathbb Z.
\end{equation}
Each chirality has central charge one and $h-\bar h=nw+N_L-N_R$ is integral. For $d$ independent bosons, $c_L=c_R=d$ at every radius. Choosing $d=24$ gives the HHPS central value at $J=2$. Arbitrarily large integer $d$ similarly supplies arbitrarily large central-charge examples, so the underdetermination is not confined to a small value of $c$.

The one-boson Euclidean partition function is
\begin{equation}\label{compactZ}
 Z_R(\tau,\bar\tau)=\frac1{|\eta(\tau)|^2}
 \sum_{n,w\in\mathbb Z}
 e^{-\pi\tau_2(n^2/R^2+w^2R^2)+2\pi i\tau_1nw}.
\end{equation}
Integer spin proves invariance under $\tau\mapsto\tau+1$. Poisson resummation in $n$ gives the equivalent expression
\begin{equation}
 Z_R=\frac R{\sqrt{\tau_2}|\eta(\tau)|^2}
 \sum_{m,w\in\mathbb Z}
 \exp\left[-\frac{\pi R^2}{\tau_2}|m-w\tau|^2\right].
\end{equation}
Under $\tau\mapsto-1/\tau$, relabeling $(m,w)$ and using
$|\eta(-1/\tau)|^2=|\tau||\eta(\tau)|^2$ proves modular invariance. The result holds for every positive radius. Radius inversion exchanges momentum and winding, but radii not related by that duality have distinct spectra.

The lowest nonzero total primary dimension is
\begin{equation}
 \Delta_{\min}(R)=\frac12\min(R^2,R^{-2}).
\end{equation}
It is $1/2$ at $R=1$, $1/8$ at $R=2$, and $1/50$ at $R=5$, while $c$ is identical. In particular a change of radius changes the gap and the finite-temperature state density without changing the Virasoro central terms or modular consistency.

\begin{proposition}\label{spectralnonselection}
Central charges $c_L=c_R=12J$, unitarity, locality, integer spin, and torus modular invariance do not in general determine a unique physical state spectrum.
\end{proposition}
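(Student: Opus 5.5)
The plan is to exhibit, for at least one admissible value of $J$ (and then for an infinite family of values), two theories that share every listed datum but have different physical spectra. I would use the compact-boson construction just given. For $J=d/12$ with $d$ a positive integer, take the product of $d$ compact bosons. Theory $\mathcal T_1$ has all radii equal to $R_1=1$, and theory $\mathcal T_2$ has all radii equal to $R_2=2$.

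First I would check that each theory carries the required structures.
\begin{itemize}
\item Each chirality of one boson has $c=1$, so the tensor product has $c_L=c_R=d=12J$ at every radius.
\item Unitarity and locality are standard for the Fock-space-with-lattice construction \cite{narain,mw}. I would take them as known rather than reprove them.
\item Integer spin follows from $h-\bar h=nw+N_L-N_R\in\mathbb Z$ in each factor.
\item Torus modular invariance of the product follows factorwise: $Z=\prod_i Z_{R_i}$, and each $Z_{R_i}$ is invariant under $\tau\mapsto\tau+1$ by integer spin and under $\tau\mapsto-1/\tau$ by the Poisson-resummed form of \eqref{compactZ}.
\end{itemize}
Both candidates therefore satisfy every hypothesis listed in the proposition.

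Second, I would show the spectra differ. The cleanest invariant is the lowest nonzero primary dimension. I would argue that $\Delta_{\min}$ of the product equals the minimum over factors, since a primary with momentum or winding in one factor only has its dimension from that factor. This gives $1/2$ for $\mathcal T_1$ and $1/8$ for $\mathcal T_2$. Because the spectrum is read off from $Z(\tau,\bar\tau)$ as a $q,\bar q$ expansion, distinct spectra also mean distinct partition functions. That in turn gives distinct finite-temperature state densities and entropies $S=(1-\beta\partial_\beta)\log Z$.

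As a backup invariant, I would compare the degeneracy at the lowest nonzero primary level. At $R=1$ this level is $(h,\bar h)=(1/2,0)$ or $(0,1/2)$ from $(n,w)=(\pm1,\pm1)$; at $R=2$ it is $(1/16,1/16)$ from $n=\pm1$. So the level pairs $(h,\bar h)$ themselves differ. I would also note that $R\mapsto1/R$ is the only duality identification, and $1\neq 2,\tfrac12$, so the two theories are not secretly equivalent.

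Finally, I would state the scope honestly. The proposition claims non-uniqueness "in general," so one counterexample suffices. Taking $d=24$ realizes the HHPS value at $J=2$, and varying $d$ gives the claim for all $J\in\tfrac1{12}\mathbb Z_{>0}$.

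The main obstacle is rigor on locality and unitarity of the compact-boson CFTs. I would handle this by citation rather than reconstruction, since the proposition is an underdetermination statement and needs only existence of the two theories. The remaining steps are the routine modular and spectral computations above.
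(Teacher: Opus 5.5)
Your proof is correct and is essentially the paper's own: $d=12J$ compact bosons satisfy every listed property at any radius, and the gaps $\Delta_{\min}=1/2$ at $R=1$ and $1/8$ at $R=2$ distinguish the spectra.

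One slip in your backup invariant does not affect the conclusion. With $h=p_L^2/4$ at $R=1$, the states $(n,w)=(\pm1,\pm1)$ have $(h,\bar h)=(1,0)$ or $(0,1)$, which are total dimension $1$. The lowest nontrivial level is $(h,\bar h)=(1/4,1/4)$, coming from $(n,w)=(\pm1,0),(0,\pm1)$. A weight $(1/2,0)$ would in fact violate the integer-spin property you invoked. The corrected pair still differs from $(1/16,1/16)$, so the conclusion stands.
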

\begin{proof}
At $12J=d\in\mathbb Z_{>0}$, tensor products of $d$ compact bosons satisfy all the stated quantum consistency properties at every radius. The distinct primary gaps above distinguish the spectra. None of those properties selects a radius or the required gravitational map.
\end{proof}
The counterexample also persists at large central charge. Take $d=24K$, $J=2K$, $M=2\sqrt K$, and the Kerr rotation parameter $a=\sqrt K$ for positive integer $K$. The spin ratio remains $1/2$, so the conformal temperatures are independent of $K$. Each candidate CFT entropy below and the Kerr area entropy scale by $K$. Their differences are therefore extensive in $c$, not merely negligible finite-$c$ corrections. These tensor-product theories do not satisfy the required sparse-spectrum assumptions for a universal replacement of their trace at the stated temperatures.

The examples have additional current symmetries. The proposition concerns the explicitly stated conditions, not every possible condition a gravitational completion might impose. Prohibiting these currents would be an additional selection input whose origin would have to be established. The examples do not prove that Kerr has multiple microscopic theories; they prove that the present symmetry data alone do not specify one.

\subsection{Exact thermal traces at the Kerr conformal temperatures}
For independent positive inverse temperatures $\beta_L,\beta_R$, the one-boson trace is an analytic continuation of the torus trace and is absolutely convergent:
\begin{equation}
 Z_R(\beta_L,\beta_R)=
 e^{(\beta_L+\beta_R)/24}
 P(\beta_L)P(\beta_R)\mathcal T_R(\beta_L,\beta_R),
\end{equation}
where
\begin{equation}
 P(\beta)=\prod_{k=1}^{\infty}(1-e^{-\beta k})^{-1},\qquad
 \mathcal T_R=\sum_{n,w}\exp\left[-\frac{\beta_Lp_L^2+\beta_Rp_R^2}{4}\right].
\end{equation}
The entropy operator is $1-\beta_L\partial_{\beta_L}-\beta_R\partial_{\beta_R}$. The linear vacuum shift cancels. Thus
\begin{align}
 S_R&=S_{\rm osc}(\beta_L)+S_{\rm osc}(\beta_R)
 +\log\mathcal T_R+\beta_L\langle p_L^2/4\rangle
 +\beta_R\langle p_R^2/4\rangle,\\
 S_{\rm osc}(\beta)&=\sum_{k=1}^{\infty}
 \left[-\log(1-e^{-\beta k})+\frac{\beta k}{e^{\beta k}-1}\right].
\end{align}
For $d$ equal-radius independent bosons the entropy is $dS_R$. This formula counts the actual spectrum of the candidate CFT rather than assuming Cardy vacuum dominance.

At the illustrative Kerr values $M=2$, $J=2$, $a=1$, take $\beta_L=\pi$ and $\beta_R=2\pi/\sqrt3$. For $d=24$ we obtain:
\begin{center}
\begin{tabular}{rrrr}
\toprule
$R$ & $\Delta_{\min}$ & $\mathcal T_R$ & $S_{\rm exact\ trace}$\\
\midrule
1 & 0.5000 & 1.8824710611 & 45.57087210\\
2 & 0.1250 & 2.7314824410 & 44.48329881\\
5 & 0.0200 & 6.8125003863 & 66.03369254\\
\bottomrule
\end{tabular}
\end{center}
The conditional leading Cardy value, and $A/4$ for this Kerr example, is $46.89833360$. The table evaluates convergent traces numerically, not exact decimal identities. At a common central charge and common temperatures, candidate microscopic entropies differ. Taking $R=5$ does not make an invalid CFT; it makes a CFT whose light spectrum does not justify replacing its trace by the desired leading Cardy expression at these temperatures.

For numerical control, truncate $|n|,|w|\le K$. If $\beta_{\min}=\min(\beta_L,\beta_R)$, set $a_0=\beta_{\min}/(2R^2)$ and $b_0=\beta_{\min}R^2/2$. The exponent obeys $E\ge a_0n^2+b_0w^2$. Therefore the omitted lattice partition sum is bounded by
\begin{equation}
 \delta\mathcal T_R\le t(a_0,K)s(b_0)+s(a_0)t(b_0,K),
\end{equation}
where
\begin{equation}
 t(a,K)=\sqrt{\frac\pi a}\operatorname{erfc}(\sqrt aK),\qquad
 s(a)=1+\sqrt{\frac\pi a}.
\end{equation}
These bounds follow by comparison of decreasing Gaussian sums with integrals. At $K=40$ the worst partition-sum bound among the tabulated radii is below $2\times10^{-44}$. The bound is for the partition sum, not directly for its energy-weighted derivatives. Entropies agree under increasing $K$ from 40 to 60 at the reported precision; the oscillator sum is truncated at 99 with exponentially negligible tails for these temperatures. Floating-point arithmetic limits the last digits.

\section{Finite entropy and infinite-dimensional symmetry}\label{finitetrace}
One might try to declare the black hole Hilbert space finite dimensional with dimension $e^{A/4}$ and then impose the full Virasoro algebra on it. There is an elementary obstruction to that construction.
\begin{proposition}
A nonzero finite-dimensional representation of the full Virasoro algebra with central element $c\mathbf1$ cannot have $c\ne0$.
\end{proposition}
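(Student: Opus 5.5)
The plan is a trace argument, using the convention of \eqref{vir}. Suppose $\rho$ is a representation of the full Virasoro algebra on a nonzero space $V$ of finite dimension $d$, and suppose the central element acts as $c\mathbf 1$. All operators $\rho(L_n)$ are then $d\times d$ matrices, so every commutator $[\rho(L_m),\rho(L_n)]$ has zero trace. No unitarity, highest-weight structure, or self-adjointness is needed, because only linear algebra enters.

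First take the commutator with $m=2$, $n=-2$:
\begin{equation}
 [\rho(L_2),\rho(L_{-2})]=4\rho(L_0)+\frac c{12}\cdot2\cdot3\,\mathbf 1
 =4\rho(L_0)+\frac c2\mathbf 1 .
\end{equation}
Taking the trace gives $0=4\operatorname{tr}\rho(L_0)+cd/2$. The unknown $\operatorname{tr}\rho(L_0)$ still has to be eliminated. For that, use $m=1$, $n=-1$, where the central term vanishes since $m(m^2-1)=0$. This gives $[\rho(L_1),\rho(L_{-1})]=2\rho(L_0)$, and taking the trace yields $\operatorname{tr}\rho(L_0)=0$. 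Substituting back gives $cd/2=0$. Since $d\ge1$, it follows that $c=0$.

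The only step that needs care is this elimination of $\operatorname{tr}L_0$. Its role is to make the argument independent of the zero-mode convention. I would add a remark that in the shifted cubic convention $\widetilde L_n$ the same computation works: there the $m=\pm1$ commutator carries the central term, and the $m=\pm2$ commutator then isolates $c$. In either convention, two trace identities fix both unknowns. I would also note that the argument shows more: after the central extension is removed, the only finite-dimensional representations are those with $c=0$. The same idea would apply to any $\mathfrak{sl}_2$-type subalgebra.

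Finally, I would comment on physical meaning. The result excludes a finite Hilbert space of dimension $e^{A/4}$ carrying an exact Virasoro action with $c=12J>0$. Any such proposal must instead use a truncation, an approximate symmetry, or an infinite-dimensional space with a restricted ensemble.
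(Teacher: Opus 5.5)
Your proposal is correct and is essentially the paper's own proof: the trace of $[L_1,L_{-1}]=2L_0$ forces $\operatorname{Tr}L_0=0$, and the trace of $[L_2,L_{-2}]=4L_0+\tfrac c2\mathbf 1$ then gives $cD_H/2=0$, so $c=0$. Your side check in the shifted cubic convention is also correct: the $m=1$ and $m=2$ traces combine to give the same condition $cD_H/2=0$.
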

\begin{proof}
Let its dimension be $D_H>0$. Taking traces of $[L_1,L_{-1}]=2L_0$ yields $\operatorname{Tr}L_0=0$. Taking traces of $[L_2,L_{-2}]=4L_0+c\mathbf1/2$ then yields $cD_H/2=0$. Hence $c=0$.
\end{proof}
This does not obstruct a finite microcanonical entropy inside an infinite-dimensional theory. A fixed-energy subspace can have finite degeneracy while $L_n$ maps it to other energy sectors; that subspace is not a representation of the full algebra. A genuine microstate calculation must specify the macroscopic charge window and count its physical states within the larger representation. Assigning a finite dimension to the entire Virasoro Hilbert space would conflate these two objects.

\section{The strengthened completion status}\label{strongstatus}
The mixed-variation calculation identifies logarithmic temperature dependence, and the moving-regulator identity exhibits a cancellation that must be retained before the horizon limit. The weighted-space example gives a concrete estimate that a weak phase-space construction must overcome. The mode-plane calculation supplies a local symplectic completion with explicit cross terms, while making clear which coefficients and canonical variables remain unproved on the gravitational side.

The compact-boson examples then show that even a well-defined unitary local modular quantum theory with the required central charges need not have the desired state spectrum or finite-temperature entropy. These results move the analysis beyond abstract caveats: several tempting routes to claiming completion now have calculable failures or missing identifications.

We have not established a complete Kerr microstate theory. In particular, we have not derived a nonlinear regulated HHPS gravitational symplectic form, its global reduction and gluing, or a unique physical spectrum. The new results do not prove that such a theory is impossible. They prove that the supplied classical central data and the listed quantum consistency properties are insufficient to select it. A microscopic completion would need independent gravitational or quantum dynamics that remove this underdetermination. Choosing a radius, orbit label, measure, or degeneracy only because it gives the area entropy would not supply those dynamics.

\section{What the additional calculations establish}\label{newstatus}
The results supply concrete progress beyond a list of desired assumptions. On the exact Kerr family, an explicit nonzero curl identifies the normalization and phase-space dependence of the horizon zero-mode charge. A smooth fixed-temperature boundary sector has nonlinear integrable charges, but its directly computed algebra has no nontrivial central extension. Explicit derivatives of the HHPS modes show why those sectors cannot be identified without changing the regularity class.

A curvature improvement gives an explicit local boundary functional invisible to the constant mode but visible to higher harmonics. The area--boost corner polarization supplies Darboux coordinates and a regulated kinematic quantization with continuous spectrum and a divergent Gibbs trace. Exact Virasoro descendant counts quantify why algebraic representations alone do not determine the desired large-$c$ state density. The spin threshold exposes the mixed-temperature regime that a full statistical argument must handle.

These are calculations within clearly stated sectors and models. The integrable smooth charges and the corner polarization have gravitational origins, but their combination is not asserted to be the full reduced gravitational phase space. Their failure to supply the HHPS central extension or a finite entropy is informative: an identification with the proposed microstate theory requires additional boundary structure and derived spectral content. The calculations of Sections \ref{mixedmodes}--\ref{strongstatus} further isolate mixed-direction and spectral-selection failures. The next unresolved calculation is the nonlinear gravitational symplectic form in a precisely defined regulated HHPS mode space, including the regulator embedding, mixed variations and the associated physical constraints.

\section{A conditional completion theorem}
\begin{theorem}[Sufficiency of independently established completion data]
Fix a connected nonextremal rotating sector at $J>0$. Suppose the following are established from a gravitational theory and specified boundary conditions:
\begin{enumerate}
\item A reduced phase space with a closed nondegenerate symplectic form, preserved by the adjusted left and right symmetry actions.
\item Finite charge one-forms with zero field-space curls and zero periods, including control of regulator limits and all allowed transverse perturbations.
\item A counterterm classification whose solutions yield one normalized Hamiltonian action modulo charge-invisible terms, and the gravitational evaluation of its cubic coefficients as $J$.
\item A quantization of these gravitational charges with a positive physical inner product, appropriate constraint gluing, and quantum central charges $12J$; possible quantum shifts are absent or independently accounted for.
\item A derived primary spectrum and multiplicity measure admitting a statistical ensemble in which the leading formula \eqref{cardy} is justified at the physical temperatures.
\end{enumerate}
Then the sector has globally integrable gravitational symmetry charges, a physical quantum realization of their algebra, and leading semiclassical entropy $A/4$. Counterterms are unique in the stated physical equivalence class, not necessarily as spacetime one-forms.
\end{theorem}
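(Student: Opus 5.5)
The proof assembles the five hypotheses in order, invoking earlier propositions at each step; no new gravitational calculation is needed. For the first conclusion, global integrability, we combine hypotheses 1 and 2 with Proposition \ref{integrability}. Zero curls $\mathcal I_\zeta=0$ on every allowed tangent pair give local primitives. The stated regulator control justifies interchanging $\dd_{\PP}$ with $\epsilon\to0$, so the closure holds for the limiting one-form and not only at finite cutoff. Vanishing periods on the connected sector then give a single-valued primitive $Q_\zeta$ by path integration from a base point. Hypothesis 1 supplies a closed nondegenerate $\Omega_{\rm red}$ preserved by the adjusted actions, so each $Q_\zeta$ is a Hamiltonian: $\dd Q_\zeta=\iota_{V_\zeta}\Omega_{\rm red}$, with the adjusted bracket $[\cdot,\cdot]_*$ accounting for field dependence.

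Next comes the uniqueness clause. Hypothesis 3 says every admissible counterterm yields the same Hamiltonian action modulo $\ker\mathscr C$. Two admissible solutions $X,X'$ therefore produce charges generating the same vector fields with respect to the same nondegenerate $\Omega_{\rm red}$. By the Hamiltonian-uniqueness proposition of Section 4, their difference is locally constant, and it vanishes once the reference value is fixed. This is equivalence of normalized Hamiltonians, not equality of spacetime representatives $X_a$. The curvature improvement of Section \ref{improvement} is an example of a term the quotient must absorb or that action matching must exclude, so the theorem's final sentence is exactly what this argument delivers. The gravitational cubic coefficient $J$ then fixes the classical centres at $12J$ through comparison with $cm^3/12$.

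For the quantum realization, hypothesis 4 supplies the positive physical inner product, the gluing, and quantum central charges $12J$. The only care needed is convention. The shift $\widetilde L_n=L_n-(c/24)\delta_{n,0}$ and the factors of $i$ must be matched between Poisson brackets and commutators so that the classical cubic coefficient and the quantum $c$ agree; hypothesis 4 explicitly covers any residual shift. Gluing must use the dual representation $V_\lambda^*$ with opposite central action, as discussed in Section 7, to avoid the same-sign anomaly. Finally, hypothesis 5 licenses \eqref{cardy} at the physical $T_L,T_R$. The algebraic identity of Section 6, $4\pi^2J(T_L+T_R)=2\pi Mr_+=A/4$ using $r_+^2-2Mr_++a^2=0$, then gives the leading entropy.

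The main obstacle is not any single step but ensuring the hypotheses are used with matched definitions. The mode space in 2 must be the same space on which 1's symplectic form lives and on which 4 quantizes, including the mixed parameter--mode directions of Section \ref{mixedmodes}. The statistical justification in 5 must also hold at the actual right-moving temperature, which for $j>1/\sqrt2$ lies below the self-dual scale. I would state explicitly that the proof uses 5 as given and does not derive vacuum dominance. Proposition \ref{spectralnonselection} shows this hypothesis cannot be dropped, and the theorem is therefore genuinely conditional.
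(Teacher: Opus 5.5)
Your proposal is correct and is essentially the paper's proof, unpacked: items 1--2 with Proposition \ref{integrability} give global normalized Hamiltonians, and item 3, which you route through the Hamiltonian-uniqueness proposition of Section 4, fixes the central values and the counterterm equivalence class. Item 4 then gives the quantum realization, and item 5 with the Kerr identity $4\pi^2J(T_L+T_R)=2\pi Mr_+=A/4$ gives the entropy. The slips are cosmetic: that identity is in Section 7 and the gluing discussion in Section 8, not Sections 6 and 7, and your step $\dd Q_\zeta=\iota_{V_\zeta}\Omega_{\rm red}$ tacitly identifies the charge one-forms of item 2 with contractions of the symplectic form of item 1, which the paper's proof likewise takes for granted.
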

\begin{proof}
Items 1 and 2 give normalized global Hamiltonians by Proposition \ref{integrability}. Item 3 fixes their central values and the equivalence class of the boundary prescription. Item 4 supplies the required quantum realization and physical constraints. Item 5 justifies the leading statistical formula; the Kerr identity in Section 7 then gives $A/4$.
\end{proof}
The theorem organizes necessary independent inputs into a sufficient package; it does not establish its premises for HHPS. In particular, it must not be cited as a completed first-principles derivation. Failure of one premise identifies a specific remaining task rather than invalidating the arithmetic of the central term.

\section{A concrete path to a gravitational completion}
The first deliverable should be a nonlinear definition of the boundary phase space. One must list the permitted expansion coefficients near a stretched horizon, including the singular pure-diffeomorphism modes, and choose function spaces in which the field equations, charge integrals, and limits are controlled. The mass dependence of the mode frame must be fixed or treated with adjusted brackets. The residual null-frame boost must be fixed or represented as an edge degree of freedom.

The second deliverable should evaluate \eqref{curl}, rather than only a central coefficient. A useful working basis has mode labels and regular perturbations, but an analytic completeness or density argument is needed to extend tests from that basis to the allowed tangent space. Symbolic tensor calculations can assist the evaluation; finite mode checks can find counterexamples but cannot prove closure of an infinite-dimensional phase space. Estimates should record possible uncanceled poles, logarithms, and cutoff dependence.

The third deliverable is the counterterm classification in a stated local derivative class. Its equations combine \eqref{cancel}, frame covariance, finiteness, and Hamiltonian normalization. The resulting homogeneous solution space must be mapped to charge improvements and invisible terms. If nontrivial improvements survive, the theory needs independent boundary physics to choose among them; matching the known entropy is not such an independent choice.

The fourth deliverable is a gravitational quantization or a demonstrated symplectic identification with a quantizable model. For the orbit route, the map \eqref{map} and the gravitational zero-mode values must be computed. The central term alone supplies neither the orbit labels nor the polarization, measure, or possible quantum shifts. The finite descendant domain should be distinguished from global self-adjoint observables.

The final deliverable is the derived spectrum and gluing map. State counting must retain physical multiplicities, discard gauge copies, and justify its ensemble. If modularity is proposed, one must construct the partition function and test modular covariance with the derived spectrum. If a different counting mechanism is used, it must calculate the density independently and explain why it reproduces the area. Near-extremal and nonrotating limits should be treated separately rather than assumed to follow from the same conformal chart.

\begin{center}
\small
\begin{tabular}{p{0.25\linewidth}p{0.35\linewidth}p{0.29\linewidth}}
\toprule
Question & Result supplied here & Remaining gravitational task\\
\midrule
Central coefficient & $\lambda=1$ uniquely in scalar-multiple ansatz & Independent full integral and boundary analysis\\
Charge integrability & Kerr curl; exact smooth-sector primitive & Verify regulated HHPS mixed variations\\
Counterterm uniqueness & Explicit curvature improvement & Classify HHPS local representatives and kernel\\
Quantum realization & Virasoro modules; regulated corner spectrum & Quantize full reduced space and impose constraints\\
Entropy & Exact descendant counts; divergent corner trace & Derive physical multiplicities and valid ensemble\\
\bottomrule
\end{tabular}
\end{center}

\section{Conclusion}
The HHPS calculation correctly combines the stated Iyer--Wald and Wald--Zoupas central terms to obtain $c_L=c_R=12J$. Temperature independence also fixes the counterterm coefficient within its scalar-multiple ansatz. These are meaningful results, but they do not supply the full nonlinear gravitational theory.

Integrability requires closure on all allowed tangent pairs and vanishing global periods. Counterterm uniqueness must be formulated relative to an admissible local class and a quotient by terms that do not change the normalized Hamiltonian action. Explicit Virasoro modules show that quantum algebraic realization is possible for the macroscopic central values, while their characters show why that realization does not determine the black hole density of states. Primary multiplicities, physical gluing, and the statistical regime are essential additional data.

The additional calculations establish a Kerr zero-mode curl, an integrable smooth boundary sector whose algebra has zero central extension, a regularity obstacle for the nonzero HHPS modes, and a charge improvement invisible to the constant smearing. The chosen area--boost corner polarization has a continuous regulated kinematic spectrum and no trace-class Gibbs operator. The mixed parameter--mode and weighted-space calculations specify additional regulator requirements, while distinct modular spectra at fixed central charge establish spectral nonselection, including at large $c$. These results identify concrete failures of simple completion attempts and sharpen the gravitational and spectral tasks that remain.

The completion problem is therefore reduced to concrete computations rather than resolved by an unsupported assertion. Establishing the gravitational phase space, solving the full obstruction equation, classifying its homogeneous solutions, and deriving the physical spectrum would turn the conditional entropy match into a first-principles microstate calculation. None of these missing premises is silently assumed to have been proved in this paper.

\appendix
\section{Reproducible numerical checks}
The following Python 3 code uses only the standard library. It checks the two Kerr first-law derivative identities and the curl coefficient at three nonextremal backgrounds, reproduces the finite-loop table and exact integer partition counts, and prints the level-two Gram eigenvalues and temperature ratios. Finite differences verify the analytic formulas numerically; the analytic derivations remain the basis of the results. No numerical test here purports to verify the nonlinear HHPS symplectic form, a global spectrum, or a quantum gravity state count.

\lstset{language=Python,basicstyle=\ttfamily\footnotesize,
 breaklines=true,columns=fullflexible,keepspaces=true,
 showstringspaces=false,frame=single}
\begin{lstlisting}
import math

def omega(m, j):
    d = math.sqrt(m**4 - j*j)
    return j / (2*m*(m*m + d))

def area(m, j):
    return 8*math.pi*(m*m + math.sqrt(m**4-j*j))

def kappa(m, j):
    d = math.sqrt(m**4-j*j)
    return d / (2*m*(m*m+d))

def curl(m, j):
    d = math.sqrt(m**4-j*j)
    return omega(m, j)*(1/m + 2*m/d)

def derivative(f, x, step=1e-5):
    return (f(x+step)-f(x-step))/(2*step)

def simpson(f, left, right, n=4000):
    step = (right-left)/n
    total = f(left)+f(right)
    for i in range(1, n):
        total += (4 if i % 2 else 2)*f(left+i*step)
    return step*total/3

for m, j in ((2., 2.), (3., 4.), (2., 3.5)):
    dm = derivative(lambda x: area(x, j), m)
    dj = derivative(lambda x: area(m, x), j)
    assert abs(kappa(m,j)*dm/(8*math.pi)-1) < 1e-7
    assert abs(kappa(m,j)*dj/(8*math.pi)+omega(m,j)) < 1e-7
    numeric = -derivative(lambda x: omega(x,j), m)
    assert abs(numeric-curl(m,j)) < 1e-7

print('curl at (2,2):', curl(2,2))
for e in (0.1, 0.05, 0.025, 0.0125):
    loop = simpson(lambda j: omega(2,j)-omega(2+e,j),
                   2, 2+e)
    print('loop:', e, loop, loop/(e*e))

p = [0]*101
p[0] = 1
for part in range(1, 101):
    for level in range(part, 101):
        p[level] += p[level-part]
assert p[10] == 42 and p[100] == 190569292
for n in (10, 25, 50, 100):
    benchmark = 2*math.pi*math.sqrt(24*n/6)
    print('states:', n, p[n], math.log(p[n]), benchmark)

print('Gram determinant:', 16*12-6*6)
print('Gram eigenvalues:', 14-math.sqrt(40),
      14+math.sqrt(40))
for j in (0.5, 0.8, 0.95, 0.99):
    print('temperature ratios:', j, 1/j,
          math.sqrt(1-j*j)/j)
print('two-module log count:', 2*math.log(p[100]))
print('two-sector benchmark:', 2*2*math.pi*math.sqrt(400))
\end{lstlisting}

\section{Checks of mixed derivatives and candidate thermal spectra}
This independent standard-library script verifies the analytic mixed mode derivative by a finite difference and reproduces the compact-boson entropy table. Increasing the lattice cutoff from 40 to 60 checks stability of the energy-weighted sums. The printed Gaussian tail estimate bounds the partition sum; a printed zero can result from floating-point underflow and does not mean the analytic tail vanishes. These calculations test explicitly specified candidate theories and do not identify them with Kerr microstates.

\begin{lstlisting}
import math
import cmath

def zy(alpha, n, w):
    phase = cmath.exp(1j*n*math.log(w)/alpha)
    return (alpha+1j*n)*phase/2

alpha, n, w = 0.7, 2, 1e-4
step = 1e-6
numeric = (zy(alpha+step,n,w)-zy(alpha-step,n,w))/(2*step)
phase = cmath.exp(1j*n*math.log(w)/alpha)
exact = phase*(1-1j*n*(alpha+1j*n)*math.log(w)/alpha**2)/2
assert abs(numeric-exact) < 1e-6
print('mixed derivative error:', abs(numeric-exact))

beta_l = math.pi
beta_r = 2*math.pi/math.sqrt(3)

def oscillator(beta):
    return sum(-math.log1p(-math.exp(-beta*k))
               + beta*k/math.expm1(beta*k)
               for k in range(1,100))

def lattice(radius, cutoff):
    z = e_l = e_r = 0.0
    for n in range(-cutoff,cutoff+1):
        for winding in range(-cutoff,cutoff+1):
            h_l = (n/radius+winding*radius)**2/4
            h_r = (n/radius-winding*radius)**2/4
            weight = math.exp(-beta_l*h_l-beta_r*h_r)
            z += weight
            e_l += weight*h_l
            e_r += weight*h_r
    entropy = math.log(z)+beta_l*e_l/z+beta_r*e_r/z
    return z, entropy

def partition_tail_bound(radius, cutoff):
    a = min(beta_l,beta_r)/(2*radius*radius)
    b = min(beta_l,beta_r)*radius*radius/2
    def tail(x):
        return math.sqrt(math.pi/x)*math.erfc(math.sqrt(x)*cutoff)
    def total(x):
        return 1+math.sqrt(math.pi/x)
    return tail(a)*total(b)+total(a)*tail(b)

s_osc = oscillator(beta_l)+oscillator(beta_r)
for radius in (1.,2.,5.):
    z, s = lattice(radius,40)
    z_more, s_more = lattice(radius,60)
    assert abs(s-s_more) < 1e-10
    print('radius, theta, entropy:', radius,z,24*(s+s_osc))
    print('partition tail bound:', partition_tail_bound(radius,40))
print('conditional Cardy value:',
      math.pi**2*24*(1/beta_l+1/beta_r)/3)
\end{lstlisting}

\end{document}